\newtheorem{theorem}{Theorem}[section]
\newtheorem{corollary}{Corollary}
\newtheorem{lemma}[theorem]{Lemma}
\newtheorem{proposition}{Proposition}
\newtheorem*{problem}{Problem}
\theoremstyle{definition}
\newtheorem{definition}[theorem]{Definition}
\newtheorem{remark}{Remark}
\newcommand{\zm}{\mathbb Z / m \mathbb{Z}}
\newcommand{\zmb}{\left(\mathbb Z / m \mathbb{Z}\right)}
\newcommand{\weight}[1]{\mathrm{wt} \left( #1 \right) }
\newcommand{\lweight}[1]{\mathrm{wt}_{L} \left( #1 \right) }
\tikzset{%
    symbol/.style={%
        ,draw=none
        ,every to/.append style={%
            edge node={node [sloped, allow upside down, auto=false]{$#1$}}}
    }
}
\tikzset{
        hatch distance/.store in=\hatchdistance,
        hatch distance=5pt,
        hatch thickness/.store in=\hatchthickness,
        hatch thickness=5pt
        }
\pgfqpoint{\hatchdistance}{\hatchdistance}}
\DeclareMathOperator{\argmax}{argmax}
\newcommand{\vZ}{\mathbb{Z}}
\newcommand{\bs}{\mathbf{s}}
\newcommand{\ba}{\mathbf{a}}
\newcommand{\bb}{\mathbf{b}}
\newcommand{\bc}{\mathbf{c}}
\newcommand{\be}{\mathbf{e}}
\newcommand{\bz}{\mathbf{z}}
\newcommand{\by}{\mathbf{y}}
\newcommand{\bx}{\mathbf{x}}
\newcommand{\Id}{\mathrm{Id}}
\newcommand{\Z}{\mathbb{Z}}
\newcommand{\bA}{\mathbf{A}}
\newcommand{\bU}{\mathbf{U}}
\newcommand{\bzero}{\mathbf{0}}
\newcommand{\bB}{\mathbf{B}}
\newcommand{\bC}{\mathbf{C}}
\newcommand{\bP}{\mathbf{P}}
\newcommand{\bH}{\mathbf{H}}
\newcommand{\supp}[1]{\text{S}\left( #1 \right)}
\newcommand{\card}[1]{\left| #1 \right|}
\newcommand{\zps}[1][s]{\vZ/p^{#1}\vZ}
\newcommand{\zpsk}[2][s]{\left(\vZ/p^{#1}\vZ\right)^{#2}}
\newcommand{\concat}{%
  \mathbin{{+}\mspace{-8mu}{+}}%
}
\pgfplotsset{compat=newest}
\definecolor{mygray}{RGB}{211,211,211}
\title{On the Hardness of the Lee Syndrome Decoding Problem}
\author[V. Weger]{Violetta Weger}
\address{Department of Electrical and Computer Engineering, Technical University of Munich}
\email{violetta.weger@tum.de}
\author[K. Khathuria et al.]{Karan Khathuria}
\address{Institute of Computer Science, University of Tartu}
\email{karan.khathuria@ut.ee}
\author[]{Anna-Lena Horlemann}
\address{School of Computer Science, University of St. Gallen}
\email{anna-lena.horlemann@unisg.ch}
\author[]{Massimo Battaglioni}
\author[]{Paolo Santini}
\address{Department of Information Engineering, Marche Polytechnic University}
\email{m.battaglioni@staff.univpm.it}
\email{p.santini@staff.univpm.it}
\author[]{Edoardo Persichetti}
\address{Department of Mathematical Sciences, Florida Atlantic University}
\email{epersichetti@fau.edu}
\date{}
\begin{document}

\maketitle

\begin{abstract}
In this paper we study the hardness of the syndrome decoding problem over finite rings endowed with the Lee metric. We first prove that the decisional version of the problem is NP-complete, by a reduction from the $3$-dimensional matching problem. 
Then, we study the  complexity of solving the problem, by translating the best known solvers in the Hamming metric over finite fields to the Lee metric over finite rings, as well as proposing some novel solutions. 
For the analyzed algorithms, we assess the computational complexity in the asymptotic regime and compare it to the corresponding algorithms in the Hamming metric. 
\end{abstract}

\section{Introduction}

NP-complete problems play a fundamental role in cryptography, since systems based on these problems offer strong security arguments. 
In this work, we focus on the Syndrome Decoding Problem (SDP), that is, the problem of decoding an arbitrary linear code, which is the basis of code-based cryptography. 
This branch of public-key cryptography originated from the the seminal work of McEliece in 1978 \cite{McEliece1978} and Niederreiter in 1986 \cite{Niederreiter1986}, and is currently regarded as one of the most studied and consolidated areas in post-quantum cryptography \cite{NISTreport2016}.
SDP in the Hamming metric was proven to be NP-complete over the binary finite field in \cite{Berlekamp1978} and over arbitrary finite fields in \cite{barg1994some}.

In order to solve this problem, an adversary is forced to apply a generic decoding algorithm. Currently, the best known algorithms of this type, for the Hamming metric, are those in the Information Set Decoding (ISD) family. These exponential-time algorithms are very well studied and understood, and constitute an important tool to determine the size of the parameters that are needed to achieve a given security level (for an overview see \cite{meurer2013coding, ISDreview}).

In the last few years, in order to reduce the size of the public keys associated to code-based cryptosystems, there has been a growing interest in exploring metrics other than Hamming's; for example, in \cite{gaborit2016hardness} the hardness of the rank SDP and in \cite{sumrank} the hardness of the sum-rank SDP were studied. By exploring these new settings one might find a metric or an ambient space that is better suited for certain scenarios, in the sense that ISD algorithms are more costly than in the classical case, for a given size of input.
In this paper we study the hardness of decoding a random linear code over $\zm$ equipped with the Lee metric. 
This metric, first introduced in \cite{Lee1958,Ulrich1957}, has recently received  attention within the cryptographic community, for example in \cite{Horlemann2019} where codes defined over $\mathbb{Z} / 4 \mathbb{Z}$ in the Lee metric were considered. After the online version of this manuscript was posted, the authors of \cite{debris} published a related article, where they propose classical and quantum algorithms based on Wagner's approach to solve the Hamming and the Lee metric SDP over finite fields, thus restricting to $\mathbb{Z}/p\mathbb{Z}$ for the Lee metric. In the common scenarios, the results in \cite{debris} match with ours: ISD algorithms in the Lee metric  have a larger cost than their classical counterparts and thus we encourage further research of the Lee metric and its applications in cryptography. In particular, the Lee metric has the potential of providing shorter sizes in a zero-knowledge identification scheme such as Stern's \cite{Stern1994}, and thus, via the Fiat-Shamir transform, \cite{fiat} also in signature schemes. Apart from cryptographic purposes, devising a generic decoding algorithm for this metric is interesting per se, from a theoretical point of view.

In this paper we show how the reduction from \cite{barg1994some} also proves the NP-completeness of SDP over any finite ring endowed with an additive weight,  i.e., where the weight of a vector is given by the sum of the weights of its entries.
Thus, this proof is also valid for codes defined over $\zm$, equipped with the Lee metric. 
In addition, we  provide algorithms to solve SDP in the Lee metric  over  $\zps$, where $p$ is a prime and $s \in \mathbb{N}$ and perform a complexity analysis. In particular, we translate the ideas of algorithms such as Stern's ISD algorithm and the BJMM ISD algorithm to our setting but also propose new algorithms that are tailored to the particular structure of a parity-check matrix over $\zps$. For the asymptotic complexity analysis, we also require additional computations with respect to the results in \cite{saddle}, where only the case where $p$ is odd was treated. We observe that the ISD algorithms in the Lee metric have a larger cost than in the Hamming metric, which makes the Lee metric a promising alternative for the Hamming metric regarding cryptographic applications.

The paper is organized as follows: in Section \ref{sec:Notation} we introduce the notation used throughout the paper and   formulate some general properties of the Lee metric. 
The proof of the NP-completeness of the Lee metric version of the SDP and the Given  Weight Codeword Problem (GWCP) is given in Section \ref{sec:SDP}. In Section \ref{sec:ISD} we present information set decoding algorithms over 
$\mathbb{Z} / {p^s} \mathbb{Z}$ with respect to the Lee metric and assess their complexities. In Section \ref{sec:asymptotic} we compare the asymptotic complexity of all the Lee ISD algorithms and the Hamming ISD algorithms.
 Finally, we draw some concluding remarks in Section \ref{sec:concl}.

\section{Notation and Preliminaries}\label{sec:Notation}

In this section we provide the notation and the preliminaries used throughout the paper.

\subsection{Notation}

Let $p$ be a prime, $q$ a prime power and $m$ a positive integer. We denote with $\zm$ the ring of integers modulo $m$, and with $\mathbb{F}_q$ the finite field with $q$ elements, as usual. Given an integer $x$, we denote its absolute value as $\card{x}$.
The cardinality of a set $V$ is denoted as $\card{V}$ and its complement by $V^C$.
We use bold lower case (respectively, upper case) letters to denote vectors  (respectively, matrices). By abuse of notation, a tuple in a module over a ring, will still be denoted  by vector. The identity matrix of size $k$ is denoted by $\Id_k$; the $k\times n$ zero matrix is denoted as $\mathbf 0_{k\times n}$, while $\mathbf 0_n$ simply denotes the zero vector of length $n$. Given a vector $\mathbf{x}$ of length $n$ and a set $S \subseteq \{1, \ldots, n\}$,  we denote by $\mathbf{x}_S$ the projection of $\mathbf{x}$ to the coordinates indexed by $S$.
In the same way, $\mathbf{M}_{S}$ denotes the projection of the $k\times n$ matrix $\mathbf{M}$ to the columns indexed by $S$. 
The support of a vector $\mathbf{a}$  is defined as
$\supp{\ba} := \{j\hspace{2mm}|\hspace{2mm}a_j\neq 0\}$. 
 For $S \subseteq    \{1, \ldots, n \}$,  we denote by $\zmb^n (S)$  the vectors in $\zmb^n $ having support in $S$.

\subsection{Coding Theory in the Lee Metric}

In this section we recall the definitions and main properties of linear codes 
over finite rings endowed with the Lee metric, see also \cite{berlekamp1968algebraic}. 

\begin{definition}
For $x \in \zm$ we define the \emph{Lee weight} to be
\begin{equation*}
\lweight{x} := \min\{ x  ,  \mid m-x  \mid \},
\end{equation*}
Then, for  $\bx \in \zmb^n$, we define the \emph{Lee weight} to be the sum of the Lee weights of its coordinates,
\begin{equation*}
\lweight{\bx} :=  \sum_{i=1}^n \lweight{x_i}.
\end{equation*}
We define the \emph{Lee distance} of $\mathbf{x}$,  $\mathbf{y} \in \zmb^n$ as 
$$ \text{d}_L(\mathbf{x}, \mathbf{y} ) := \lweight{\mathbf{x}- \mathbf{y}} .$$ 
\end{definition}

\noindent The Lee sphere and the Lee ball of radius $w$ are denoted by
\begin{gather*} S_L(n,w,m) := \{ \mathbf{v} \in \zmb^n \mid \lweight{\mathbf{v}} = w \}, \\
B_L(n,w,m) := \{ \mathbf{v} \in \zmb^n \mid \lweight{\mathbf{v}} \leq w \},
\end{gather*}
respectively.
We denote their sizes by $F_L(n,w,m):=|S_L(n,w,m)|$ and $V_L(n,w,m):=|B_L(n,w,m)|$, respectively, and study their asymptotics in Subsection \ref{subsec:prelimLee}. 

\begin{definition} Let $m$ be a positive integer.
A \emph{ring-linear code} over $\zm$ of length $n$ is a $\mathbb{Z}/m\mathbb{Z}$-submodule of $\zmb^n$. 
\end{definition}

The \textit{type} of $\mathcal{C} \subseteq \left(\zm\right)^n$ is defined as
$$k= \log_m(\mid \mathcal{C}\mid)$$
and the \textit{rate} of $\mathcal{C}$ is then given by $R=k/n.$

The \emph{minimum Lee} \emph{distance} $d_L(\mathcal C)$ of a code $\mathcal C\subseteq \zmb^n$ is the minimum of all Lee distances of distinct codewords of $\mathcal C$:
$$d_L(\mathcal{C})= \min\{ d_L(x,y) \mid x \neq y \in \mathcal{C}\}.$$

Linear codes can be completely represented through a generator or a parity-check matrix.
\begin{definition}
A matrix $\mathbf G$ is called a \emph{generator matrix} for a (ring) linear code $\mathcal C$ if its row space  corresponds to $\mathcal C$. In addition, we call a matrix $\mathbf{H}$ a  \emph{parity-check matrix} for $\mathcal{C}$ if its kernel corresponds to $\mathcal{C}.$
\end{definition}
Note that such generator and parity-check matrices are not unique. If $m$ is not prime, even the number of rows of such matrices is not unique.

From Section \ref{sec:ISD} onward we will restrict ourselves to $\zps$, for some prime $p$ and positive integer $s$, since  $\zps$ provides the most studied case for ring-linear codes.

Using the fundamental theorem of finite abelian groups, we note that any code $\mathcal{C} \subseteq (\zps)^n$ is isomorphic to 
$$(\zps)^{k_1} \times (\mathbb{Z}/p^{s-1}\mathbb{Z})^{k_2} \times \cdots \times (\mathbb{Z}/p\mathbb{Z})^{k_s},$$ for some $k_1, \ldots, k_s \in \mathbb{N}$.
Thus, we can give the code additional parameters: $(k_1, \ldots, k_s)$ is called the \emph{subtype} of $\mathcal{C}$ and $K= \sum\limits_{i=1}^n k_i$ is called the \emph{rank}. Moreover, the parameter $k_1$ is called the \emph{free rank} of $\mathcal{C}$, and if $k_1 = k = K$, then the code is said to be \emph{free}.

\begin{proposition}[Systematic Form]\label{newgen}
Let $\mathcal{C}$ be a linear code in $\zpsk{n} $ of subtype $(k_1, \ldots, k_s)$. Then $ \mathcal{C}$ is permutation equivalent to a code having the following systematic
 parity-check matrix $\bH \in (\zps)^{(n-k_1) \times n}$,
\begin{equation}\label{systematicformH}
\bH =\begin{pmatrix}
\bA_{1,1} & \bA_{1,2} & \cdots & \bA_{1,s-1} & \bA_{1,s} & \Id_{n-K} \\
p\bA_{2,1} & p\bA_{2,2} & \cdots & p\bA_{2,s-1} & p\Id_{k_s} & \bzero_{k_s \times (n-K)}  \\
p^2\bA_{3,1} & p^2\bA_{3,2} & \cdots & p^2 \Id_{k_{s-1}} & \bzero_{k_{s-1} \times k_s}  & \bzero_{k_{s-1} \times (n-K)}  \\
\vdots & \vdots & & \vdots & \vdots & \vdots \\
p^{s-1}\bA_{s,1} & p^{s-1}\Id_{k_2} & \cdots &  \bzero_{k_2 \times k_{s-1}}  & \bzero_{k_2 \times k_s}  & \bzero_{k_2 \times (n-K)}
\end{pmatrix},
\end{equation}
where $\bA_{1,j} \in (\mathbb{Z} / p^s\mathbb{Z})^{(n-K) \times k_j}, \bA_{i,j} \in (\mathbb{Z} / p^{s+1-i} \mathbb{Z})^{k_{s-i+2} \times k_j}$ for $i > 1$.
\end{proposition}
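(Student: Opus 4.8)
The plan is to produce a convenient generating set for $\mathcal{C}$, then construct $\bH$ directly and verify it is a parity-check matrix. Since $\zps$ is a finite chain ring---a local principal ideal ring whose ideals $(p^j)$ form a single chain---every submodule of $\zpsk{n}$ has a generator matrix that can be brought, by elementary row operations (left multiplication by an invertible matrix over $\zps$) and a permutation of the columns, to the block-triangular standard form
\begin{equation*}
\bG = \begin{pmatrix} \Id_{k_1} & \bB_{1,2} & \cdots & \bB_{1,s} & \bB_{1,s+1} \\ & p\Id_{k_2} & \cdots & p\bB_{2,s} & p\bB_{2,s+1} \\ & & \ddots & & \vdots \\ & & & p^{s-1}\Id_{k_s} & p^{s-1}\bB_{s,s+1} \end{pmatrix},
\end{equation*}
with column blocks of widths $k_1,\dots,k_s,n-K$. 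The point of this reduction is that the non-units of $\zps$ are exactly the multiples of $p$, so every pivot is forced to be a power of $p$; the exponents $0,1,\dots,s-1$, occurring with multiplicities $k_1,\dots,k_s$, are the invariants that yield $\card{\mathcal{C}} = \prod_{i=1}^s (p^{s-i+1})^{k_i} = h$. I would treat this as standard structure theory of codes over finite chain rings and fix, once and for all, the column permutation it produces.

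Next I would write $\bH$ in the shape \eqref{systematicformH}, with its column blocks aligned to those of $\bG$, and establish $\Ker(\bH) = \mathcal{C}$ in two steps. For the inclusion $\mathcal{C} \subseteq \Ker(\bH)$ I would impose $\bH\bG^\top = \bzero$ and read it block by block. A row of the $i$-th row block of $\bG$ carries a factor $p^{i-1}$ and a row of the $j$-th row block of $\bH$ a factor $p^{j-1}$, so their inner product is divisible by $p^{i+j-2}$ and hence vanishes modulo $p^s$ whenever $i+j \ge s+2$; only the relations with $i+j \le s+1$ survive. In each surviving relation the identity block of $\bG$ in column block $i$ isolates a single unknown block $\bA_{j,i}$ of $\bH$, so the relations form a triangular system that can be solved recursively, the earlier blocks feeding into the later ones. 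Here one must check that each determination of $\bA_{j,i}$ modulo $p^{\,s-i-j+2}$ is compatible with the reduced modulus $p^{\,s+1-j}$ in which that block lives; the matched powers of $p$ make the two consistent, so the $\bA_{i,j}$ exist.

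Finally I would count $\card{\Ker(\bH)} = h$, which together with the inclusion forces equality. Writing $\bx = (\bx_1,\dots,\bx_s,\bx_{s+1})$ with $\bx_i \in \zpsk{k_i}$ and $\bx_{s+1} \in \zpsk{n-K}$, the staircase of scaled identity blocks in $\bH$ lets me solve $\bH\bx^\top = \bzero$ by back-substitution: the top block determines $\bx_{s+1}$ outright, while the $j$-th block for $j \ge 2$, after dividing out $p^{j-1}$, pins down $\bx_{s+2-j}$ modulo $p^{\,s-j+1}$ in terms of the lower-indexed coordinates. Thus $\bx_1$ is free and each $\bx_\ell$ with $\ell \ge 2$ contributes $p^{\,(s-\ell+1)k_\ell}$ lifts, giving $\card{\Ker(\bH)} = \prod_{\ell=1}^s p^{\,(s-\ell+1)k_\ell} = h$. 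I expect the main difficulty to lie not in this count but in the orthogonality step: because we work over $\zps$ rather than a field, one cannot invoke rank--nullity and must instead track $p$-valuations throughout, verifying that the graded powers of $p$ in $\bG$ and $\bH$ line up so that every block equation is solvable over the ring.
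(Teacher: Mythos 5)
The paper states Proposition~\ref{newgen} without proof, treating it as standard structure theory for codes over the chain ring $\zps$, so there is no in-paper argument to compare against; judged on its own, your proposal is correct and complete at the level of detail one would expect here. The two points that actually need care are both handled: (i) in the orthogonality step, each block $\bA_{j,i}$ is the pivot unknown of exactly one surviving relation (the one coming from the $i$-th row block of $\bG$), it is constrained only modulo $p^{\,s-i-j+2}$, and since $s-i-j+2 \leq s+1-j$ this is weaker than the modulus $p^{\,s+1-j}$ in which the block is specified, so a lift always exists and the already-fixed lifts of the higher-indexed blocks feed consistently into the lower-indexed relations; (ii) the back-substitution count $\card{\Ker(\bH)} = \prod_{\ell=1}^{s} p^{(s-\ell+1)k_\ell} = h$ is right, since the $j$-th row block pins $\bx_{s-j+2}$ down modulo $p^{\,s-j+1}=p^{\,\ell-1}$ and hence leaves $p^{(s-\ell+1)k_\ell}$ lifts. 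For what it is worth, the more common textbook route is to observe that the claimed $\bH$ is precisely a standard-form generator matrix of the dual code $\mathcal{C}^\perp$, whose type is the reverse $(p^s)^{n-K}(p^{s-1})^{k_s}\cdots p^{k_2}$, and then invoke $\card{\mathcal{C}}\cdot\card{\mathcal{C}^\perp}=p^{sn}$ together with double duality over the Frobenius ring $\zps$; your direct verification buys independence from that duality machinery at the cost of the explicit triangular solve, and either is acceptable.
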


In order to define information set decoders for the codes we consider, we need a suitable notion of information set, which is given next.
 \begin{definition}
 Consider a code $\mathcal{C}$ over
 $\zps$ of length $n$ and type $k$. 
 An \emph{information set} of $\mathcal C$ is a  set $I \subseteq \{1, \ldots, n\}$ of minimal size such that $$\mid \{\mathbf{c}_I \mid \mathbf c \in \mathcal{C} \} \mid
 = \mid \mathcal{C} \mid.$$ 
 \end{definition}
 Using parity-check matrices of the form  \eqref{systematicformH}, we can see that an information set for the respective code has cardinality $K$, corresponding to the first $K$ columns of $\bH$.
 
 Taking a random code over  $\zps$, it was shown in \cite{modules} that we have to consider the parity-check matrix in such a form, since the probability of having a non-free code is non-negligible.

\subsection{Asymptotics}\label{subsec:prelimLee}

In this section we provide the asymptotic analysis of the volume of the Lee ball (and Lee sphere) and the Gilbert-Varshamov bound. 
In the whole section, we assume that $q=p^s$ and define  $M:=  \left\lfloor \frac{q}{2} \right\rfloor$. Furthermore, we assume that all the code parameters $k=k(n)$, $K=K(n)$, $k_1=k_1(n)$ are functions of $n$, with the usual restrictions 
\begin{align*}
    0 \leq k_1(n) \leq k(n) \leq K(n) \leq n.
\end{align*} 
Further, let $t = t(n)$ be another function of $n$, which corresponds to the Lee weight of a vector of length $n$, hence it satisfies $0 \leq t(n) \leq Mn$. 

Since these quantities are bounded, we may define their relative limits in $n$
\begin{align*}
\lim\limits_{n \to \infty} \frac{k(n)}{n} &=: R,   \hspace{1cm}
\lim\limits_{n \to \infty} \frac{t(n)}{n} =: T,  \\
\lim\limits_{n \to \infty} \frac{k_1(n)}{n} &=: R_1, \hspace{0.7cm}
\lim\limits_{n \to \infty} \frac{K(n)}{n} =: R_I.
\end{align*}

The volume of the Lee balls (and spheres) can be approximated with the following results. We will use generating functions of the sizes of Lee spheres and Lee balls, which are known (see e.g. \cite{hastola}) to be $f(x)^n$ and $\frac{f(x)^n}{1-x}$, respectively, where $$f(x):=\left\{\begin{array}{ll} 1+2\sum_{i=1}^M x^i & \text{ if } q \text{ is odd}, \\ 1+2\sum_{i=1}^{M-1} x^i + x^M & \text{ if } q \text{ is even}. \end{array}\right.  $$ 
 We denote the coefficient of $x^t$ in a function $\Phi(x)$ by $[x^t]\Phi(x)$. 

We will make use of the following reformulation of a result from \cite{saddle} for estimating the size of the Lee balls and Lee spheres, for large $n$. 

\begin{lemma}\cite[Corollary 1]{saddle}\label{lemGardySole}
Let $\Phi (x) = f(x)^n g(x)$ with $f(0)\neq 0$, and $t(n)$ be a function in $n$. Set $T := \lim_{n\rightarrow \infty}t(n)/n$ and $\rho$ as the solution to 
$$ \underbrace{\frac{x f'(x)}{f(x)}}_{=:\Delta(x)} = T .$$
If $ \Delta'(\rho) >0$, and the modulus of any singularity of $g(x)$ is larger than $\rho$, then for large $n$
$$ \frac{1}{n} \log_q( [x^{t(n)}]\Phi(x)) \approx \log_q(f(\rho)) - T \log_q(\rho) + o(1) .$$
\end{lemma}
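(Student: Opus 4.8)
The plan is to treat this as a saddle-point (Cauchy coefficient extraction) estimate, which is exactly the content of the cited \cite[Corollary 1]{saddle}. Since the lemma is advertised as a reformulation of that corollary, the real task is twofold: to verify that the stated hypotheses are the ones under which the saddle-point method applies, and to translate the asymptotic equivalence into the claimed logarithmic form. First I would write the coefficient through Cauchy's integral formula,
\[
[x^{t(n)}]\Phi(x) = \frac{1}{2\pi i}\oint_\gamma \frac{f(x)^n\, g(x)}{x^{t(n)+1}}\,dx,
\]
where $\gamma$ is a small positively oriented circle about the origin, lying inside the disc of convergence, hence inside the modulus of every singularity of $g$. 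Rewriting the integrand as $g(x)\exp\!\big(n\,h(x)\big)$ with $h(x) = \log f(x) - \tfrac{t(n)}{n}\log x$ isolates the factor whose exponential growth in $n$ dominates the integral.

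Next I would locate the saddle point by solving $h'(x)=0$, namely
\[
\frac{f'(x)}{f(x)} - \frac{t(n)}{n}\cdot\frac{1}{x} = 0 \iff \frac{x f'(x)}{f(x)} = \frac{t(n)}{n},
\]
which in the limit is precisely $\Delta(x)=T$, whose solution is $\rho$ by assumption. Here the two hypotheses do their work: the condition $\Delta'(\rho)>0$ guarantees that $\rho$ is a simple, non-degenerate saddle (equivalently that $h''(\rho)$ has the right sign and does not vanish), so that the steepest-descent contour through $\rho$ produces the standard Gaussian approximation; and the condition that every singularity of $g$ has modulus strictly larger than $\rho$ allows us to deform $\gamma$ so as to pass through $\rho$ without crossing any singularity, while ensuring that $g(\rho)$ is finite and well-defined. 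With these in place, the saddle-point method yields the leading-order estimate
\[
[x^{t(n)}]\Phi(x) \sim \frac{g(\rho)\,f(\rho)^n}{\rho^{\,t(n)}}\cdot\frac{1}{\sqrt{2\pi n\,\sigma^2}},
\]
where $\sigma^2>0$ is the curvature at the saddle, expressible through $\Delta'(\rho)$.

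Finally I would take $\tfrac{1}{n}\log_q$ of both sides, obtaining
\[
\frac{1}{n}\log_q\!\big([x^{t(n)}]\Phi(x)\big) = \log_q f(\rho) - \frac{t(n)}{n}\log_q\rho + \frac{1}{n}\log_q\!\left(\frac{g(\rho)}{\sqrt{2\pi n\,\sigma^2}}\right),
\]
and observe that the last term is $O\!\big(\tfrac{\log n}{n}\big) = o(1)$, while $\tfrac{t(n)}{n}\to T$ lets me replace the middle coefficient by $T$ up to an $o(1)$ error. This is exactly the stated conclusion. The main obstacle is the rigorous analytic justification of the saddle-point approximation itself — deforming the contour onto the path of steepest descent, bounding the contribution away from $\rho$, and verifying that analyticity of $g$ in a neighbourhood of $\rho$ (supplied by the singularity hypothesis) suffices — but since this is precisely \cite[Corollary 1]{saddle}, I would invoke that result for the hard step and restrict my own argument to checking the hypotheses and carrying out the logarithmic reformulation.
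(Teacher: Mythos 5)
The paper gives no proof of this lemma at all --- it is stated purely as a citation of \cite[Corollary 1]{saddle} --- so your proposal, which sketches the standard Cauchy/saddle-point derivation (correctly identifying that $\Delta'(\rho)>0$ makes $\rho$ a non-degenerate saddle via $h''(\rho)=\Delta'(\rho)/\rho$, and that the singularity condition on $g$ permits the contour through $\rho$) and then defers to the cited corollary for the rigorous contour estimates, is consistent with and strictly more informative than what the paper does. Your logarithmic reformulation at the end, absorbing the $g(\rho)/\sqrt{2\pi n\sigma^2}$ factor into the $o(1)$ term and replacing $t(n)/n$ by $T$, is exactly the translation step the authors implicitly perform, so I see no gap.
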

Using this, we get the following asymptotic behavior of Lee spheres and balls. 

\begin{lemma}\label{asympt_V} 
\begin{enumerate}
    \item If $q$ is odd and $0 \leq T<M$, then
    $$\lim\limits_{n \to \infty}  \frac{1}{n} \log_q( F_L(n, t(n), q))=\log_q(f(\rho)) - T \log_q(\rho),$$
    where $\rho$ is the unique real positive solution of $2 \sum_{i=1}^M (i-T) x^i = T$ and $$f(\rho)= 1 + 2\sum_{i=1}^M \rho^i = \frac{M(\rho+1)+1}{(1-\rho)(M-T)+1}.$$
    If moreover $T < \frac{M(M+1)}{2M+1}$, then also
    \[ \lim\limits_{n \to \infty} \frac{1}{n} \log_q(   V_L(n, t(n), q)  ) = \log_q(f(\rho)) - T \log_q(\rho), \]  
    
    \item If $q$ is even and $0 \leq T <M$, then
    \[ \lim\limits_{n \to \infty}  \frac{1}{n} \log_q( F_L(n, t(n), q)) = \log_q(g(\rho^\prime)) - T \log_q(\rho^\prime),\]
    where $\rho^\prime$ is the unique real positive solution of $$2 \sum_{i=1}^{M-1} (i-T) x^i + (M-T) x^M = T$$ and 
    \begin{align*}
        g(\rho^\prime) & = 1 + 2 \sum_{i=1}^{M-1} {\rho^\prime}^i + {\rho^\prime}^M  \\& = \frac{{\rho^\prime}^{M+1}(T-M) + {\rho^\prime}^M(T-M+1) + {\rho^\prime}(T-M) + T +M +1}{{\rho^\prime}(T-M) + M+1-T}.
    \end{align*} 
    If moreover $T < \frac{M}{2}$, then also 
    \[ \lim\limits_{n \to \infty}  \frac{1}{n} \log_q(   V_L(n, t(n), q)   ) = \log_q(g(\rho^\prime)) - T \log_q(\rho^\prime).\] 
\end{enumerate} 
\end{lemma}
\begin{proof}
\begin{enumerate}
    \item If $q$ is odd, with $f(z) := 1 + 2 \sum_{i=1}^M x^i$, we have that the generating functions for $F_L(n, t, q)$ and $  V_L(n,t,q)  $ are given by $f(x)^n$ and $\frac{f(x)^n}{1-x}$, respectively. The statement now follows from Lemma \ref{lemGardySole}, where the derivation of $f(\rho)$ can be found in the proof of \cite[Theorem 4]{saddle}. Notice that we do not need any restriction on $T$ in the first case, since here $g(x)=1$ does not have any singularities. In the second case the only singularity of $g(x)=\frac{1}{1-x}$ is $1$, hence we need that $\rho <1$. This is provided with the condition on $T$, since $\Delta(x) = \frac{2\sum_{i=1}^M ix^i}{1+2\sum_{i=1}^M x^i}$ is strictly increasing for $x\geq 1$ and 
    $$\Delta(1) = \frac{M(M+1)}{2M+1} > T = \Delta(\rho).$$
    
    \item If $q$ is even, with $g(z) := 1 + 2 \sum_{i=1}^{M-1} x^i+x^M$, we have again that the generating functions for $F_L(n, t, q)$ and $ V_L(n,t,q)  $ are given by $g(x)^n$ and $\frac{g(x)^n}{1-x}$, respectively, and that the statement follows from Lemma \ref{lemGardySole}, where the derivation of $g(\rho)$ can be found in the proof of \cite[Theorem 4]{saddle}. We again need that $\rho <1$ in the second case, because $\Delta(x) = \frac{2\sum_{i=1}^{M-1} ix^i + Mx^M}{1+2\sum_{i=1}^{M-1} x^i + x^M}$ is strictly increasing for $x\geq 1$ and 
    $$\Delta(1) = \frac{M}{2} > T = \Delta(\rho).$$
     
\end{enumerate} 
\end{proof}

\begin{remark}
Notice that asymptotically the size of the ball equals the size of the largest sphere inside the ball, as long as $T$ fulfills the prescribed conditions, which is approximately $T<M/2$ for both cases.

Note that if $q$ is odd and $T \geq M(M+1)/(2M+1)$ or if $q$ is even and $T \geq M/2$, we get that \( \lim\limits_{n \to \infty}  \frac{1}{n} \log_q(   V_L(n, t(n), q)  ) = 1.\) 
This can be easily observed as 
\begin{align*}
    1 &\geq  \lim\limits_{n \to \infty}  \frac{1}{n} \log_q(  V_L(n, t(n), q)  )   \\
    & \geq  \begin{cases}   \lim\limits_{n \to \infty}  \frac{1}{n} \log_q(   V_L(n, M(M+1)/(2M+1), q) ) & \mbox{if } q \mbox{ is odd} \\  \lim\limits_{n \to \infty}  \frac{1}{n} \log_q(   V_L(n, M/2, q) )  & \mbox{if } q \mbox{ is even} \end{cases} \\
    &  =1.
\end{align*}
The last equality follows from Lemma \ref{asympt_V}, where we get $\rho =1$ and $\rho^\prime = 1$, respectively. 
\end{remark}

Let $AL(n,d,q)$ denote the maximal cardinality of a code $\mathcal{C} \subseteq (\mathbb{Z}/q\mathbb{Z})^n$ of minimum Lee distance $d$ and let us  consider the maximal information rate 
$$R(n,d,q) := \frac{1}{n} \log_q(AL(n,d,q)),$$
for $0 \leq d \leq n M$. We define the relative minimum distance to be  $\delta := \frac{d}{nM}.$

\begin{theorem}[Asymptotic Gilbert-Varshamov Bound \cite{leeas}]\label{asympt_GV}
It holds that
$$\liminf\limits_{n \to \infty}R(n,\delta Mn, q) \geq \lim\limits_{n \to \infty} \left( 1 - \frac{1}{n} \log_q(  V_L(n, \delta Mn, q)  ) \right).$$
\end{theorem}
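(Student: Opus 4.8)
The plan is to obtain the asymptotic statement directly from the finite Gilbert--Varshamov bound (Theorem~\ref{GV}) by solving its defining inequality for the dimension and then letting $n \to \infty$. Write $d = d(n) := \delta M n$ (replacing it by $\lceil \delta M n \rceil$ if it is not an integer, which shifts $\delta$ by $O(1/n)$ and is harmless below). For each $n$, let $k = k(n)$ be the largest integer for which the hypothesis of Theorem~\ref{GV} holds; in the odd case this is
\[
V_L(n, d-1, q) < 1 + \frac{2 q^{n-k} - 1}{q - 1}.
\]
Theorem~\ref{GV} then guarantees an $[n, k(n)]_q$ linear Lee code of minimum Lee distance at least $d$, and since linear codes are in particular codes, $AL(n, d, q) \ge q^{k(n)}$, whence $R(n, d, q) \ge k(n)/n$.

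First I would solve the finite inequality for $k$. Rearranging the displayed condition gives
\[
q^{\,n-k} > \frac{(q-1)\bigl(V_L(n, d-1, q) - 1\bigr) + 1}{2},
\]
so, bounding the right-hand side crudely by $\tfrac{q-1}{2}\,V_L(n,d-1,q)$ and accounting for the rounding to an integer, the largest admissible $k = k(n)$ satisfies
\[
\frac{k(n)}{n} \ge 1 - \frac{1}{n} \log_q\!\bigl( V_L(n, d-1, q) \bigr) - \frac{1}{n} \log_q\!\Bigl( \frac{q-1}{2} \Bigr) - \frac{1}{n}.
\]
Here $\frac{1}{n}\log_q\frac{q-1}{2}$ is a fixed constant divided by $n$, and the integer-rounding term $\frac1n$ likewise vanishes; all such contributions form an $o(1)$ term. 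The even case of Theorem~\ref{GV} replaces the factor $2$ by $1$, which alters only a constant and leaves the bound asymptotically identical, so I may treat both cases at once.

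Taking $\liminf_{n\to\infty}$ of $R(n,d,q)\ge k(n)/n$ then yields
\[
\liminf_{n \to \infty} R(n, \delta M n, q) \ge \lim_{n \to \infty}\Bigl( 1 - \tfrac{1}{n} \log_q V_L(n, d-1, q) \Bigr),
\]
and it remains to replace the radius $d-1$ by $d = \delta M n$. I expect this to be the only genuine obstacle: a shift of the radius by $1$ must be shown not to change the exponential growth rate of the volume. This follows from the estimate $V_L(n, d, q) \le (1 + 2n)\, V_L(n, d-1, q)$. Indeed, sending each weight-$d$ vector to a weight-$(d-1)$ vector by moving its first nonzero coordinate one Lee-step towards $0$ defines a map $S_L(n,d,q) \to S_L(n,d-1,q)$; each target has at most $2n$ preimages, since a weight increase of $1$ at a coordinate can be realized in at most two ways over $n$ coordinates. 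Hence $F_L(n,d,q) \le 2n\, F_L(n,d-1,q)$, which gives the claimed volume bound. Because $\frac{1}{n}\log_q(1+2n) \to 0$ and trivially $V_L(n,d-1,q) \le V_L(n,d,q)$, the two normalized logarithms have the same limit, and substituting $V_L(n,\delta M n, q)$ for $V_L(n,d-1,q)$ completes the proof.
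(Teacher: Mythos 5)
Your argument is correct, and it is worth noting that the paper itself offers no proof of this theorem: it is imported as a citation to an external reference, whereas you give a self-contained derivation from the finite Gilbert--Varshamov bound (Theorem~\ref{GV}) already stated in the paper. The two ingredients you supply are exactly the ones needed: (i) solving the finite GV condition for the largest admissible $k(n)$, where the constants $\tfrac{q-1}{2}$ (resp.\ $q-1$ in the even case) and the integer rounding contribute only $O(1/n)$ to $k(n)/n$; and (ii) the radius shift from $d-1$ to $d$, which is the only place a genuine estimate is required. Your counting argument $F_L(n,d,q)\leq 2n\,F_L(n,d-1,q)$ is sound --- every candidate preimage of a weight-$(d-1)$ vector differs from it in a single coordinate whose Lee weight is increased by one, and there are at most two such ring elements per coordinate --- and it yields $V_L(n,d,q)\leq(1+2n)V_L(n,d-1,q)$, so the normalized logarithms at radii $d-1$ and $d$ share the same limit. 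Two small points you implicitly rely on and could make explicit: the quantity $AL(n,d,q)$ must be read in the usual way as the maximal cardinality of a code with minimum distance \emph{at least} $d$, so that the code produced by Theorem~\ref{GV} is admissible; and the existence of the limit on the right-hand side is presupposed by the statement itself (and is supplied, under the stated conditions on $T$, by Lemma~\ref{asympt_V}), so taking $\liminf$ of your chain of inequalities is legitimate. Neither affects the validity of the argument.
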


 In \cite[Theorem 22]{modules} it is shown that a code generated by a random matrix  achieves the Gilbert-Varshamov bound with high probability.

In ISD complexity analyses, we are interested in computing $\lim\limits_{n\to \infty} \frac{1}{n} \log_q( f(n))$, where $f(n)$ denotes the actual complexity function, all polynomial terms will become negligible. Note that a finite sum results in the asymptotics in a maximum of the limits of the summands, i.e., for a function $g(i)$ with $\lim\limits_{n \to \infty} 1/n \log_q(g(i)) = G(i)$, we have that
$$\lim\limits_{n \to \infty} 1/n \log_q \left(\sum_{i=1}^\ell g(i) \right) = \max\{G(i) \mid i \in \{1, \ldots, \ell \} \}.$$

\subsection{Uniform Distribution over $\mathbb{Z}/p^s\mathbb{Z}$ }
 
In the complexity analysis  we will assume that a random vector-matrix product is 
uniformly distributed. In fact, this is a key part for ISD algorithms, as we study the average-case complexity.
The following lemma gives the mathematical base for this assumption. 

\begin{lemma}\label{unif}
Consider a random vector $\bx \in \zpsk{K}$.
\begin{enumerate}
    \item If $\bx \in p\zpsk{K}$, then $p | \lweight{\bx}$.
    \item If $p | \lweight{\bx}$, then $\bx \in p\zpsk{K}$ with probability $\frac{F_L(K,\frac{v}{p},p^{s-1})}{F_L(K,v,p^s)}$.
    \item Let $\bA \in (\zps)^{K \times n}$ be chosen uniformly at random. Moreover, let $\bx \in (\zps)^{K}$ be chosen uniformly at random among the vectors that contain a unit, i.e., that do not live in $p\zpsk{K}$, possibly  with a weight restriction. Then $\bx\bA$ is uniformly distributed in $(\zps)^{n}$.
\end{enumerate}
\end{lemma}
\begin{proof}
\begin{enumerate}
    \item It follows from the definition of the Lee weight, that if $p|x_i$ for $i \in \{1, \ldots, K\}$, then $p|\lweight{x_i}$, which implies the statement.
    \item The number of $\bx \in p\zpsk{K}$ with $\lweight{\bx}=v$ is equal to the number of $\bar\bx \in (\mathbb Z/p^{s-1}\mathbb Z)^K$ with $\lweight{\bar\bx}=v/p$, which implies the statement. 
    \item Let $x_i$ be a unit entry of $\bx$. Then $\bx \bA = \left(\sum\limits_{j\neq i} x_j \bA_j\right) + x_i \bA_i$, where $\bA_i$ denotes the $i$-th row of $\bA$, and since $x_i \bA_i$ is uniformly distributed in $(\zps)^{n}$, so is $\bx \bA$.
\end{enumerate}
 
\end{proof}

Note that in the SDP it is assumed that $\bs$ is any element of $\zpsk{n-k_1}$. For ISD algorithms, however, it is assumed that $\bs$ is indeed a syndrome, that is, that there exists a solution $\be$ of weight $t$ such that $\bs = \be\bH^\top.$ 

In addition, ISD algorithms assume that the input $\bH$ is chosen uniformly at random, which 
by Lemma \ref{unif} implies that $\bs \in \zpsk{n-k_1}$ is uniformly distributed.

\section{The NP-Completeness of the General Syndrome Decoding Problem}\label{sec:SDP}

In this section we consider a more general definition of weight and prove that the Syndrome Decoding Problem (SDP) is NP-complete in this general setting, which includes the Lee weight setting as a special case. Note that all  problems in this section are easily seen to be in NP, since testing a solution clearly requires polynomial time. Hence showing NP-hardness automatically implies NP-completeness.

Let $R$ be a finite ring with identity, denoted by 1 and ${\rm wt}:R \to \mathbb{R}_{\geq 0}$ be a function on $R$ that satisfies the following properties: \begin{enumerate}
    \item   
    $\weight{0}=0$, 
    \item  $\weight{1}=1$ and $\weight{x} \geq 1 $ for all $x \neq 0$.  
\end{enumerate}
 
We call such a function a \textit{weight} (function). We extend this weight additively to $R^n$ by writing $\weight{\bx} = \sum_{i=1}^n \weight{x_i}$ for all $\bx = (x_1,\ldots,x_n) \in R^n$. In this section, we restrict ourselves to such \textit{additive weights}. 
Clearly, the Hamming weight and the Lee weight are examples of additive weights. 

Note that all results of this section hold also for  any weight fulfilling condition 1. and a more general version of condition 2., i.e., $\weight{x} \geq \lambda$ for all $x \neq 0$ and $\weight{1}=\lambda$ for some $\lambda \in \mathbb{R}_{> 0}$.   The argument is then similar, due to scaling.

We then define the syndrome decoding problem as:
\begin{problem}{\textbf{$(R,\rm wt)$-Syndrome Decoding Problem ($(R,\rm wt)$-SDP)}} \label{prob:AWSDP} \\
 Given $\bH \in R^{(n-k) \times n}$, $\bs \in R^{n-k}$ and $t \in \mathbb N$, is there a vector $\be \in R^n$ such that $\weight{\be} \leq t$ and $\be \bH^\top = \bs$?
\end{problem}

Berlekamp, McEliece and van Tilborg proved in \cite{Berlekamp1978} the NP-completeness of the syndrome decoding problem for the case of binary linear codes equipped with the Hamming metric (i.e., employing the Hamming weight in the problem definition).
 In particular, their proof was based on a reduction from the 3-dimensional matching (3DM) problem to SDP. 
In \cite{barg1994some}, Barg generalized this proof to an arbitrary alphabet size.

\begin{proposition}\label{prop:AW-SDP}
The $(R,\rm wt)$-SDP is NP-complete for any finite ring $R$ with identity and for any additive weight.  
\end{proposition}
As a corollary we obtain that also SDP in the Lee metric is NP-complete.

Since the proof of Proposition \ref{prop:AW-SDP} is very similar to the original proof of \cite{Berlekamp1978} and \cite{barg1994some}, we include it in the appendix for the sake of completeness. 
   
Let $R$ be a finite ring with identity and $k \leq n$ be positive integers. Let $\rm wt$ be an additive weight function on $R^n$ and  $k  \leq n$ be positive integers. We now define the given weight codeword problem as follows:
\begin{problem}{\textbf{$(R,\rm wt)$-Given Weight Codeword Problem ($(R,\rm wt)$-GWCP)}}\label{prob:GAWCP}
\\ Given $\bH\in R^{(n-k)\times n}$ and $w\in\mathbb{N}$, is there a vector $\mathbf c \in R^n$ such that $\weight{\mathbf c} =w$  and $ \mathbf c\mathbf H^\top = \mathbf 0_{n-k}$? 
\end{problem}

In order to prove the NP-completeness of of Problem \ref{prob:GAWCP}, we give a polynomial time reduction from the 3-dimensional matching problem.
\begin{problem}{\textbf{3-Dimensional Matching (3DM) Problem}}
\\Let $T$ be a finite set and $U \subseteq T \times T \times T$. Given $U,T$, decide if there exists a set $W \subseteq U$ such that $\card{W} = \card{T}$ and no two elements of $W$ agree in any coordinate. \end{problem}

\begin{proposition}\label{prop:GAWCP}
The $(R,\rm wt)$-GWCP is NP-complete for any finite ring $R$ with identity and for any additive weight.
\end{proposition}

The proof is similar to the original reduction of \cite{Berlekamp1978,barg1994some}; however, the adaptation is not trivial and therefore we report it next. 

\begin{proof}
We prove the NP-completeness by a reduction from the 3DM problem. To this end, we start with a random instance of 3DM, i.e., $T$ of size $t$, and $U \subseteq T \times T \times T$ of size $u$. Let us denote the elements in $T= \{b_1, \ldots, b_t\}$ and in $U= \{\ba_1, \ldots, \ba_u\}$. 
From this we build the matrix  $\overline{\bH}^\top \in R^{u \times 3t}$, like in \cite{Berlekamp1978}: 
\begin{itemize}
    \item for $ j \in \{1, \ldots, t\}$, we set $h_{i,j} = 1$ if $\ba_i[1]= b_j$ and $h_{i,j}=0$ else,
    \item for $ j \in \{t+1, \ldots, 2t\}$, we set $h_{i,j} = 1$ if $\ba_i[2]= b_j$ and $h_{i,j}=0$ else,
    \item for $ j \in \{2t+1, \ldots, 3t\}$, we set $h_{i,j} = 1$ if $\ba_i[3]= b_j$ and $h_{i,j}=0$ else.
\end{itemize}
 Let $M$ be the maximal weight of an element in $R$.
Then we construct $\bH^\top \in R^{(3tMu+3t+u) \times(3tMu+3t)}$ in the following way.
\begin{equation*}
    \bH^\top = \begin{pmatrix}
    \overline{\bH}^\top & \Id_u & \cdots & \Id_u \\
    -\Id_{3t} & \mathbf{0} & \cdots & \mathbf{0} \\
    \mathbf{0} & -\Id_u & & \mathbf{0} \\
    \vdots & & \ddots & \\
    \mathbf{0} & \mathbf{0} & & -\Id_u
    \end{pmatrix},
\end{equation*}
where we have repeated the size-$u$ identity matrix $3tM$ times in the first row.
Let us set $w= 3t^2M^2+4tM$ and assume that we can solve the $(R,\rm wt)$-GWCP on the instance given by $\bH, w$ in polynomial time. 
Let us consider two cases. 

\underline{Case 1:}
In the first case the $(R,\rm wt)$-GWCP solver returns as answer `yes', since there exists a $\mathbf c \in R^{3tMu +3t+u}$, of weight equal to $w$, such that $\mathbf c\bH^\top =\mathbf{0}_{3tMu+3t}$.
Let us write this $\mathbf c$ as
$$\mathbf c= (\overline{\mathbf c}, \mathbf c_0, \mathbf c_1, \ldots, \mathbf c_{3tM}),$$
where $\overline{\mathbf c} \in R^u, \mathbf c_0 \in R^{3t}$ and $\mathbf c_i \in R^u$ for all $i \in \{1, \ldots, 3tM\}.$
Then, $\mathbf c\bH^\top = \mathbf{0}_{3tMu+3t}$ gives the equations 
\begin{align*}
    \overline{\mathbf c}\overline{\bH}^\top - \mathbf c_0&= \mathbf{0}, \\
    \overline{\mathbf c} - \mathbf c_1 & = \bzero, \\
    & \vdots \\
    \overline{\mathbf c} - \mathbf c_{3tM} &= \bzero.
\end{align*}
Hence, we have that $\mathrm{wt}(\overline{\mathbf c}\overline{\bH}^\top)= \mathrm{wt}(\mathbf c_0)$ and 
$$\mathrm{wt}(\overline{\mathbf c}) = \mathrm{wt}(\mathbf c_1) = \cdots = \mathrm{wt}(\mathbf c_{3tM}).$$ 
Due to the coordinate-wise additivity of the weight, we have that 
$$\mathrm{wt}(\mathbf c) = \mathrm{wt}(\overline{\mathbf c}\overline{\bH}^\top) + (3tM+1)\mathrm{wt}(\overline{\mathbf c}).$$
Since $\mathrm{wt}(\overline{\mathbf c}\overline{\bH}^\top) \leq 3tM$, we have that  $\mathrm{wt}(\overline{\mathbf c}\overline{\bH}^\top)$ and $\mathrm{wt}(\overline{\mathbf c})$ are uniquely determined as the remainder and the quotient, respectively, of the division of $\mathrm{wt}(\mathbf c)$ by $3tM+1.$
In particular, if $\mathrm{wt}(\mathbf c) = 3t^2M^2+4tM,$ then we must have $\mathrm{wt}(\overline{\mathbf c})=tM$ and $\mathrm{wt}(\overline{\mathbf c}\overline{\bH}^\top)=3tM.$ 
Hence, the first $u$ parts 
of the found solution $\mathbf c$, i.e., $\overline{\mathbf c}$, give a matching for the 3DM in a similar way as in the proof of Proposition \ref{prop:AW-SDP}. For this we first observe that $\overline{\mathbf c}\overline{\bH}^\top$ is a full support vector and it plays the role of the syndrome, i.e., $\overline{\mathbf c} \overline{\bH}^\top = (x_1,\ldots,x_{3t})$, where $x_i \in R\setminus \{0\}$ having $\weight{x_i}=M$ for all $i \in \{1,\ldots,3t\}$. We note that $\overline{\mathbf c}$ has exactly $t$ non-zero entries, which corresponds to a solution of 3DM (for details see the proof of Proposition \ref{prop:AW-SDP} in the appendix).

\underline{Case 2:}
If the solver returns as answer `no', this is also the correct answer for the 3DM problem. In fact, it is easy to see that the above construction also associates any solution $W$ of the 3DM to a solution $\bc$ of the corresponding $(R, \rm wt)$-GWCP.

Thus, if such a polynomial time solver  for the $(R,\rm wt)$-GWCP exists, we can also solve the 3DM problem in polynomial time.   
\end{proof}

\begin{remark}
 We remark that the bounded version of this problem, i.e., deciding if a codeword $\bc$ with $\weight{\bc} \leq w$ exists, can be solved by applying the solver of Problem \ref{prob:GAWCP} at most $w$ many times.  
\end{remark}

\begin{remark}
 The computational versions of Problems \ref{prob:GAWCP} and \ref{prob:LSDP} are at least as hard as their decisional counterparts. 
Trivially, any operative procedure that returns a vector with the desired properties (when it exists) can be used as a direct solver for the above problems.
\end{remark}

As mentioned in the introduction, the previous results imply that the corresponding problems in the Lee metric are also NP-complete.

\begin{problem}{\textbf{Lee - Syndrome Decoding Problem (L-SDP)}}\label{prob:LSDP}
Let $m$ and $k\leq n$ be positive integers. Given $\bH\in (\mathbb{Z} / m \mathbb{Z})^{(n-k)\times n}$, $\bs\in (\mathbb{Z} / m \mathbb{Z})^{n-k}$ and $t\in\mathbb{N}$, is there a vector $\mathbf e \in (\mathbb Z / m \mathbb{Z})^n$ such that $\lweight{\be}\leq t$ and $ \mathbf e\mathbf{H}^\top  = \mathbf s$?
\end{problem}
 
 \begin{problem}{\textbf{Given Lee Weight Codeword Problem (GLWCP)}}\label{prob:GLWCP}
Let $m$ and $k\leq n$ be positive integers. Given $\bH\in (\mathbb{Z} / m \mathbb{Z})^{(n-k)\times n}$ and $w\in\mathbb{N}$, is there a vector $\mathbf c \in(\mathbb Z / m \mathbb{Z})^n$ such that $\lweight{\be} =w$ and $ \mathbf c\mathbf{H}^\top = \mathbf 0_{n-k}$?
\end{problem}

\begin{corollary}
The L-SDP and the GLWCP are NP-complete for any fixed $m \in \mathbb{N}$.
\end{corollary}

\section{information set Decoding in the Lee Metric}\label{sec:ISD}
  After showing the NP-completeness of the syndrome decoding problem in the Lee metric, we now aim at  assessing the complexity of solving it. For this, we adapt some well known ISD algorithms from the Hamming metric to the Lee metric, as well as derive new ISD algorithms that emerge from the special form of the parity-check matrix over a finite ring $\zps$. For each algorithm we provide the asymptotic analysis of the average cost, where the input parity-check matrix and syndrome are uniformly distributed over $\zps$.

The first ISD algorithm was proposed by Prange in 1962 \cite{Prange1962} and assumes that no errors happen in a randomly chosen information set. Although one iteration requires a low computational cost, the whole algorithm has a large overall cost, as many iterations have to be performed.
There have been many improvements upon the original algorithm by Prange, focusing on a more likely error pattern. Indeed, these approaches increase the cost of one iteration but, on average, require a smaller number of iterations (see \cite{Lee1988, Leon1988, stern, canteaut1998new, canteautsendrier, chabaud,  mmt, finiasz, bernstein2011smaller, Becker2012}). 
For an overview of the binary case see \cite{meurer2013coding, ISDreview}. With new cryptographic schemes proposed over general finite fields, most of these algorithms have been generalized to $\mathbb F_q$ (see \cite{ peters, niebuhr, klamti, hirose, interlando2018generalization}). 
 
In this paper we provide five different ISD algorithms for the ring $\zps$ equipped with the Lee metric. We start with the two-blocks algorithm, that may be seen as a generalization of Stern's algorithm in the Hamming metric \cite{stern}. Due to the special form of the parity-check matrix, dividing it in $s$-blocks, we also propose the $s$-blocks algorithm, which at the best of our knowledge has no known counterpart in the Hamming metric. 
In addition, we use the idea of Partial Gaussian Elimination (PGE) \cite{finiasz}, where one reduces the initial SDP to a smaller SDP instance. 
For this scenario, we provide both Wagner's approach \cite{wagner} of partitioning and the representation technique approach \cite{bjmm}, where we allow subvectors to overlap. Finally, following the idea of \cite{bjmm} we also introduce a combination of their approaches.
We notice that ISD algorithms in the Lee metric, for the special case of $\mathbb{Z}/ 4 \mathbb{Z}$, have already been studied in \cite{Horlemann2019}.


We have also considered generalized birthday decoding algorithms (GBA); however, we have found that they do not improve PGE algorithms. GBA is exploiting, as the name suggests, the birthday paradox; this allows in the Hamming metric to choose smaller list sizes, thus decreasing the cost and, at the same time, still finding a solution. However, the situation in the Lee metric is different: even if we consider the maximal list sizes, we run into the following problem: one has to ensure that at least one vector leading to a solution lives in the lists. This can be enforced by either repeating this step and thus employing this success probability or choosing a small enough number of positions on which one merges. In the Lee metric both solutions are not satisfactory, since we either have to choose the number of positions for merging to be 0, or the lowest cost is achieved by using the algorithm on just one level. 
Therefore, using even smaller lists would only worsen the situation and the cost. However, if one is interested in such an approach, it can be found for Wagner's algorithm over $\mathbb{Z}/p\mathbb{Z}$ in \cite{debris}.

\subsection{Two-Blocks Algorithm}\label{twoblocks}

In this section, we adapt  Stern's algorithm from the Hamming metric to the Lee metric, which also encompasses the basic Lee-Brickell's and Prange's algorithms as special cases. 

The idea of Stern's algorithm in the Hamming metric is to partition the chosen information set $I$ into two sets $X$ and $Y$ containing $v_1$ and $v_2$ errors, respectively. Moreover, it is assumed that there exists a zero-window $Z$ of size $z$ outside of the information set where no errors happen. Our adaptation is as follows.

Let us assume that the information set is $I=\{1, \ldots, K\}$, and that the zero-window is  $Z=   \{K+1, \ldots, K + z\}$; furthermore, let us define $J= \{1, \ldots, n\} \setminus (I \cup Z) = \{K+z+1, \ldots, n\}$. 
To bring the parity-check matrix $\bH \in (\zps)^{(n-k_1) \times n}$  into systematic form, we multiply by an invertible matrix $\bU \in (\zps)^{(n-k_1) \times (n-k_1)}$.
We can write the error vector partitioned into the information set part $I$, the zero-window part $Z$ and the remaining part $J$ as $\be = (\be_I, \mathbf{0}_z,  \be_J)$, with $\lweight{\be_I}=v_1+v_2$ and $\lweight{\be_J}=t-v_1-v_2$. 
We are in the following situation
\begin{align*}
\be\bH^\top\bU^\top &= \begin{pmatrix}
\be_I  & \bzero_z  & \be_J 
\end{pmatrix}  \begin{pmatrix}
\bA^\top  & \bB^\top & p\bC^\top \\ \Id_{z } & \bzero_{z \times (n-K-z)} & \bzero_{ z \times (K-k_1)} \\ \bzero_{(n-K-z) \times z} & \Id_{n-K-z} & \bzero_{(n-K- z) \times (K-k_1)}
\end{pmatrix} \\ & =  \begin{pmatrix} \bs_1 & \bs_2 & p \bs_3 \end{pmatrix} = \bs \bU^\top,
\end{align*}
where $\bA \in (\zps)^{z \times K}, \bB \in (\zps)^{(n-K-z) \times K}$, $\bC \in (\Z /p^{s-1} \Z)^{(K-k_1) \times K}$ and $\bs_1 \in (\zps)^{z}, \bs_2 \in (\zps)^{n-K-z}, \bs_3 \in (\Z /p^{s-1} \Z)^{K-k_1}$.
From this, we get the following three conditions
\begin{align}
\be_I\bA^\top &= \bs_1, \label{sternZqcond1} \\
\be_I\bB^\top + \be_J &= \bs_2, \label{sternZqcond2} \\
p\be_I \bC^\top &= p \bs_3. \label{sternZqcond3}
\end{align}
In the algorithm, we will define a set $P$ that contains all  vectors of the form  $\be_1 \bA^\top $ and $ \be_1\bC^\top$, and a set $Q$ that contains all vectors  of the form $\bs_1 - \be_2 \bA^\top$ and $\bs_3- \be_2 \bC^\top$. Whenever a vector in $P$ and a vector in $Q$ coincide, we call such a pair a \emph{collision}. 
 For each collision, we define $\be_J$ such that Condition \eqref{sternZqcond2} is satisfied, \emph{i.e.},
$$\be_J =  \bs_2 - \be_I\bB^\top.$$
If in addition $\lweight{\be_J}=t-v_1-v_2$, we have found the wanted error vector.
 The two-blocks algorithm in the Lee metric is depicted in Algorithm \ref{sternZq}.

\begin{algorithm}[h!]
\caption{Two-Blocks Algorithm over  $\zps$ in the Lee metric}\label{sternZq}
\begin{flushleft}
Input: $\bH \in (\zps)^{(n-k_1) \times n}$, $\bs \in (\zps)^{n-k_1}$, $t \in \mathbb{N}$, $K= m_1+m_2, z < n-K$ and $v_i < \min \{ \lfloor \frac{p^s}{2} \rfloor m_i, \lfloor \frac{t}{2} \rfloor\} $ for $i \in \{1,2\}$.\\ 
Output: $\be \in (\zps)^n$ with $\be\bH^\top=\bs $ and $\lweight{\be}=t$.
\end{flushleft}
\begin{algorithmic}[1]
\State Choose an information set $I\subset \{1, ...,n\}$  of size $K$ and choose a zero-window $Z \subset  I^C$ of size $z$, and define $J=  (I \cup Z)^C$.
\State Partition $I$ into $X$ of size $m_1$ and $Y$ of size $m_2=K-m_1$.
\State Compute $\bU\in (\zps)^{(n-k_1) \times (n-k_1)}$, such that \begin{align*} (\bU\bH)_{I}=  \begin{pmatrix} \bA \\ \bB \\ p\bC \end{pmatrix}, \   (\bU\bH)_{Z}=  \begin{pmatrix} \Id_z \\ \bzero_{(n-K-z) \times z} \\ \bzero_{(K-k_1) \times z} \end{pmatrix}, \ (\bU\bH)_J=  
	 \begin{pmatrix} \bzero_{z \times (n-K-z)} \\ \Id_{n-K-z} \\ \bzero_{(K-k_1) \times (n-K-z) } \end{pmatrix},  
\end{align*}	 where $\bA\in(\zps)^{z \times K }, \bB \in (\zps)^{(n-K-z) \times K}$ and $\bC \in (\Z/ p^{s-1}\Z)^{(K-k_1) \times K}$.
\State Compute $\bs\bU^\top = \begin{pmatrix}
\bs_1 & \bs_2 &p\bs_3
\end{pmatrix}$, where $\bs_1 \in (\zps)^z, \bs_2 \in (\zps)^{n-K-z}$ and $\bs_3 \in (\Z /p^{s-1} \Z)^{K-k_1}$.
\State Compute the set 
$$P= \{( \be_1\bA^\top, \be_1 \bC^\top, \be_1) \mid \be_1 \in (\zps)^{K}(X), \  \lweight{\be_1}=v_1\}.$$
\State Compute the set 
$$Q= \{( \bs_1-  \be_2\bA^\top, \bs_3 - \be_2 \bC^\top, \be_2) \mid \be_2 \in (\zps)^{K}(Y), \  \lweight{\be_2}=v_2\}.$$
\For{$(\ba, \bb, \be_1) \in P$} \For{$(\ba, \bb, \be_2) \in Q$}
		\If{$\lweight{\bs_2 - (\be_1+\be_2)\bB^\top}=t-v_1-v_2$}
		\State Return $\be, $ such that $\be_I = \be_1+ \be_2$, $\be_Z = \bzero_z$ and $\be_J =  \bs_2- (\be_1+\be_2)\bB^\top$.
				\EndIf
				\EndFor
				\EndFor
		\State  Start over with Step 1 and a new selection of $I$.
\end{algorithmic}
\end{algorithm}
For the following complexity analysis, we first recall the assumptions made in Section \ref{subsec:prelimLee}. Let all the code parameters $k(n), k_1(n), K(n), t(n)$ be functions of $n$, and define: 
\[R := \lim_{n \to \infty} \frac{k(n)}{n}, R_1 := \lim_{n \to \infty} \frac{k_1(n)}{n}, R_I := \lim_{n \to \infty} \frac{K(n)}{n}, T := \lim_{n \to \infty} \frac{t(n)}{n}.\]
We fix the real numbers $V,L$ with $0 \leq V \leq  T/2$ and $0 \leq L \leq  1-R$, such that 
$0 \leq T-2V \leq M(1-R-L).$
    Then we fix the internal algorithm parameters $m_1=m_2 = R_I/2$ and $v_1,v_2, \ell$ which we see as functions depending on $n$, such that
$ \lim\limits_{n \to \infty} \frac{v_i}{n}=V$ and $ \lim\limits_{n \to \infty} \frac{\ell}{n}=L$.

In order to ease the  asymptotic formulas, we introduce the following notation
\begin{align*}
    S(R,V) \coloneqq & \lim\limits_{n \to \infty} \frac{1}{n} \log_q(F_L(k,v,q)) = R \lim\limits_{k \to \infty} \frac{1}{k} \log_q(F_L(k,v,q)), \\
    L( R,V) \coloneqq &  \lim\limits_{n \to \infty} \frac{1}{n} \log_q(  V_L(k,v,q) ) = R \lim\limits_{k \to \infty} \frac{1}{k} \log_q(   V_L(k,v,q) ),
\end{align*}
for which we will use the results of Lemma \ref{asympt_V}.

\begin{theorem}\label{sternZqcost}
Algorithm  \ref{sternZq} over $\zps$ equipped with the Lee metric has the following asymptotic average time complexity:
\begin{gather*}     -2S(R_I/2, V)- S(1-R_I-L, T-2V) + S(1,T)+ \\
    \max\{S(R_I/2, V), 2S(R_I/2,V)-L-R_I+R_1\}.
\end{gather*}
\end{theorem}

\begin{proof}
The only steps that contribute to the asymptotic complexity, are the construction of $P$ and $Q$, finding the collisions between the two sets and the number of iterations required. 

The construction of the sets $P$ and $Q$ costs asymptotically   
$S(R_I/2, V)$, which is the asymptotic size of the Lee sphere of weight $v_i$ and length $m_i$.

In the next step, we check for collisions between $P$ and $Q$.  
The resulting vectors $\be_1 \bA^\top$, respectively $ \bs_1- \be_2\bA^\top $, live in $(\zps)^z$, whereas the second part of the resulting vectors $\be_1 \bC^\top$ and $\bs_3-\be_2 \bC^\top$ lives in $p(\zps)^{(K-k_1)}$,  and from Lemma \ref{unif} we can assume that they are uniformly distributed.  If $v_1,v_2$ are chosen coprime to $p$, then this is certainly true.
Therefore, we have to check asymptotically
$2S(R_I/2,V)-L-R_I+R_1$ many times.
 
Finally, since the success probability of one iteration is given by 
\begin{equation*}
F_L(m_1, v_1,p^s) F_L(m_2, v_2,p^s) F_L(n-K-z, t-v_1-v_2,p^s) F_L(n,t,p^s)^{-1},    
\end{equation*}
the asymptotic number of iterations is given by 

$$ -2S(R_I/2, V)- S(1-R_I-L, T-2V) + S(1,T).$$
\end{proof}

We remark that, if we set $m_1= K$ and thus $m_2=0$, and ask for Lee weight $v_1=v$ and $v_2=0$  in the above algorithm, we get Lee-Brickell's approach, and if we  further choose $\ell=0$ and $v=t$ we get Prange's approach, costing asymptotically $$S(1,T)- S(1-R_I,T).$$

\subsection{$s$-Blocks Algorithm}\label{sblocks}
In this section, we present an algorithm that takes advantage of the structure of the  parity-check matrix over the ring $\zps$. For a code of subtype $(k_1, \ldots, k_s)$ and rank $K$ 
we set $k_{s+1} = n-K$ and bring the parity-check matrix into the systematic form in \eqref{systematicformH}.
The idea, in this algorithm, is to split the error vector into $s+1$ parts, where the first $s$ parts belong to the information set and the last part is outside the information set. Then, we go through all the error vectors having weight $v$ in the information set and $t-v$ outside the information set. In order to go through all such error vectors, we fix a weak compositions of $v$ into $s$ parts, which represent the weight distribution of the first $s$ parts of the error vector. Let us denote by $W_s(v)$ the set of all weak compositions of $v$ into  $s$ parts.

For simplicity, we assume that the information set is $I = \{1,\ldots,K\}$. Let $(v_1,\ldots,v_s)$ be a weak composition of $v$ into $s$ positive integers, and let $v_{s+1} = t-v$.  Therefore, the error vector is of the form $\be=(\be_1,\ldots,\be_{s+1})$, where $\be_{i} \in \left(\vZ / p^s \vZ\right)^{k_i}$ with $\lweight{\be_i} = v_i$ for each $i\in \{1,\ldots,s+1\}$. We first bring the parity-check matrix $\bH$ into the systematic form \eqref{systematicformH}, by multiplying it with an invertible matrix $\bU$. Thus, if we also partition the syndrome $\bs$ into parts of the same size as the (row-)parts of $\bU\bH$, we obtain the following situation:
\begin{align*}
\bU\bH \be^\top &=  \begin{pmatrix} 
\bA_{1,1} & \bA_{1,2} & \cdots & \bA_{1,s-1} & \bA_{1,s} & \Id_{k_{s+1}} \\
p \bA_{2,1} & p \bA_{2,2} & \cdots & p \bA_{2,s-1} & p \Id_{k_s} & \bzero_{k_s \times k_{s+1}} \\
p^2 \bA_{3,1} & p^2 \bA_{3,2} & \cdots & p^2 \Id_{k_{s-1}} & \bzero_{k_{s-1} \times k_s} & \bzero_{k_{s-1} \times k_{s+1}} \\
\vdots & \vdots &  & \vdots & \vdots & \vdots \\
p^{s-1} \bA_{s,1} & p^{s-1} \Id_{k_2} & \ldots & \bzero_{k_{2} \times k_{s-1}} & \bzero_{k_2 \times k_s} & \bzero_{k_2 \times k_{s+1}} 
\end{pmatrix} \begin{pmatrix} \be_1^\top \\ \be_2^\top  \\ \be_3^\top \\ \vdots  \\ \be_{s+1}^\top \end{pmatrix} \\ &= \begin{pmatrix} \bs_1^\top \\ p\bs_2^\top \\ \vdots \\ p^{s-1}\bs_s^\top \end{pmatrix} = \bU \bs^\top.
\end{align*}
From this we obtain $s$ conditions
\begin{align}
& \bA_{1,1} \be_1^\top + \bA_{1,2} \be_2^\top + \cdots + \bA_{1,s}\be_{s}^\top + \be_{s+1}^\top = \bs_1^\top \label{eq:cond1} \\ 
& p  \left( \bA_{2,1}\be_1^\top + \bA_{2,2}\be_2^\top + \cdots + \bA_{2,s-1} \be_{s-1}^\top + \be_{s}^\top \right) = p \bs_2^\top \label{eq:cond2} \\
& \vdots \nonumber \\
& p^{s-2}  \left(\bA_{s-1,1} \be_1^\top + \bA_{s-1,2} \be_2^\top + \be_3^\top \right) = p^{s-2} \bs_{s-1}^\top \label{eq:conds-1} \\
& p^{s-1}  \left( \bA_{s,1}\be_1^\top + \be_2^\top \right) = p^{s-1} \bs_s^\top. \label{eq:conds}
\end{align}
Next, we go through all error vectors of selected weight distribution and check the above conditions using backward recursion.  For this, we use the following notation to expand vectors over $\vZ/ p \vZ$: for a vector $\bx \in \left(\vZ/ p^s \vZ\right)^\ell$, we write $\bx^{(i)} = \bx \pmod{p^i}$ for all $i \in \{1,\ldots,s\}$, and similarly for a matrix $\bB \in \left(\vZ/ p^s \vZ\right)^{m \times \ell}$, we write $\bB^{(i)} = \bB \pmod{p^i}$ for all $i \in \{1,\ldots,s\}$.  Moreover, we define  $ \bx^{(i)} + p^{i} \left( \vZ / p^s \vZ\right)^\ell =\{\bx^{(i)} + p^{i} \by \mid \by \in \left( \vZ / p^s \vZ\right)^\ell\} $ for any $i \in \{1,\ldots,s\}$.

In the next step, we iterate over all the vectors $\be_1 \in \left(\vZ/ p^s \vZ\right)^{k_1}$ having Lee weight $\lweight{\be_{1}}=v_1$. Now, for a given $\be_1$,  Condition \eqref{eq:conds} defines $${\be_2^{(1)}}^\top := {\bs_s^{(1)}}^\top - \bA_{s,1}^{(1)}{\be_1^{(1)}}^\top.$$ Thence, we iterate over all the vectors $\be_2 \in \be_2^{(1)} + p \left(\vZ/ p^s \vZ\right)^{k_2}$. If $\be_2$ has Lee weight $\lweight{\be_2} = v_2$, we proceed; otherwise, we choose another $\be_2$. Now, as a consequence of Condition \eqref{eq:conds-1}, we obtain $$ {\be_3^{(2)}}^\top:={\bs_{s-1}^{(2)}}^\top - \bA_{s-1,1}^{(2)}{\be_1^{(2)}}^\top -\bA_{s-1,2}^{(2)}{\be_2^{(2)}}^\top.$$ Then, we iterate over all the vectors $\be_3 \in \be_3^{(2)} + p^2 \left(\vZ/ p^s \vZ\right)^{k_3}$ and check whether any has the correct Lee weight $v_3$.

We proceed in this fashion until we obtain the $\be_{s+1}$ from Condition \eqref{eq:cond1} in the following way 
$$\be_{s+1}^\top:=\bs_{1}^\top - \sum_{i=1}^s \bA_{1,i}\be_i^\top. $$ And finally we check whether $\be_{s+1}$ has the remaining Lee weight $v_{s+1}$.

The choice of $(v_1, \ldots, v_s) \in W_s(v)$ are inputs of the algorithm that can be optimized, i.e., chosen in such a way that the cost is the lowest.

The details are provided in Algorithm \ref{algo:sblock}.

\begin{algorithm}[h!]
\caption{$s$-Blocks Algorithm over  $\zps$ in the Lee metric}\label{algo:sblock}
\begin{flushleft}
Input: $\bH \in (\zps)^{(n-k_1) \times n}$, $\bs \in(\zps)^{n-k_1}$, $t \in \mathbb{N}$, $v < \min \{ K \lfloor \frac{p^s-1}{2} \rfloor, t\}$ and $(v_1,\ldots,v_s) \in W_s(v)$\\ 
Output: $\be \in(\zps)^n$ with $\be\bH^\top=\bs $ and $\lweight{\be}=t$.
\end{flushleft}
\begin{algorithmic}[1]
\State Choose an information set $I\subset \{1, ...,n\}$  of size $K$.
\State Compute $\bU\in(\zps)^{(n-k_1) \times (n-k_1)}$ and an $n \times n$ permutation matrix $\bP$, such that $\bU\bH \bP$ is in the systematic form as in \eqref{systematicformH}.
 \State Compute $ \bs\bU^\top = \begin{pmatrix}
 \bs_1 & p \bs_2 & \cdots & p^{s-1} \bs_s
 \end{pmatrix}$, where $\bs_i  \in (\Z/p^{s-i+1} \Z)^{k_{s-i+2}}$.
    \For{ $\be_1 \in (\zps)^{k_1}$ with $\lweight{\be_1}=v_1$}
        \State ${\be_2^{(1)}}^\top := {\bs_s^{(1)}}^\top - \bA_{s,1}^{(1)}{\be_1^{(1)}}^\top$.
        \For{$\be_2 \in \be_2^{(1)} + p \left(\vZ/ p^s \vZ\right)^{k_2}$}
            \If{$\lweight{\be_2} = v_2$ }
                \State $ {\be_3^{(2)}}^\top:={\bs_{s-1}^{(2)}}^\top - \bA_{s-1,1}^{(2)}{\be_1^{(2)}}^\top -\bA_{s-1,2}^{(2)}{\be_2^{(2)}}^\top$
                \For {$\be_3 \in \be_3^{(2)} + p^2 \left(\vZ/ p^s \vZ\right)^{k_3}$ }
		            \If{$\lweight{\be_3} =v_3$}
		                \State $\ddots$
		                \State ${\be_s^{(s-1)}}^\top = {\bs_2^{(s-1)}}^\top - \sum_{i=1}^{s-1} \bA_{2,i}^{(s-1)} {\be_i^{(s-1)}}^\top$
		                \For{ $\be_{s} \in \be_s^{(s-1)} + p^{s-1} \left( \vZ / p^s \vZ \right)^{k_s}$}
		                    \If{ $\lweight{\be_s}=v_s$}
		                        \State $\be_{s+1}^\top:=\bs_{1}^\top - \sum_{i=1}^s \bA_{1,i}\be_i^\top. $
		                         \If{ $\lweight{\be_{s+1}} = t-v$}
		                             \State Return $\be = (\be_1, \ldots, \be_{s+1})\bP$
		                        \EndIf
		                    \EndIf
		              \EndFor
		          \EndIf
		      \EndFor
		  \EndIf
	   \EndFor
	\EndFor
\State  Start over with Step 1 and a new selection of $I$.
\end{algorithmic}
\end{algorithm}

For the complexity analysis, we fix the real numbers $V_1,\ldots,V_s$ with $ V_i \leq MR_i, 0 \leq \sum_{i=1}^s V_i \leq T$ and $T-\sum_{i=1}^s V_i \leq (1-R_I)M$.
Then we fix the internal algorithm parameters  $v_i$ such that
$ \lim\limits_{n \to \infty} \frac{v_i}{n}=V_i$, and the code parameters $k_i$ such that $\lim\limits_{n \to \infty} \frac{k_i}{n}=R_i$.

\begin{theorem}
 The asymptotic average time complexity of Algorithm \ref{algo:sblock} is
 \begin{align*}
    & S(1,T) -\sum_{i=2}^s S(R_i, V_i)  - S\left(1-R_I, T-\sum_{i=1}^s V_i\right) \\ &  + \max \left\lbrace  \sum_{i = 2}^{j-1} \frac{-R_i (i-1)}{s} +  S(R_i,V_i) \mid 2 \leq j \leq s+1 \right \rbrace.
\end{align*}
\end{theorem}
\begin{proof}

In one iteration we go through all $\be_1 \in \left(\vZ/p^s\vZ\right)^{k_1}$ of Lee weight $v_1$, which are $F_L(k_1,v_1,p^s)$ many. To compute the cost of the subsequent steps, we divide them in groups of three steps, i.e., at the $j$-th stage we first compute $\be_j^{(j-1)}$ according to the syndrome conditions, and then,  we go through all $\be_j \in \be_j^{(j-1)} + p^{j-1} \zpsk{k_j}$ such that they have Lee weight $v_j$. The cost of computing $\be_j^{(j-1)}$ is given by
\[f(\be_j) := \sum_{\ell=1}^{j-1} k_j k_\ell,\]
and the average number of $\be_j$'s we go through in the set $\be_j^{(j-1)} + p^{j-1} \zpsk{k_j}$ having Lee weight $v_j$ is given by
\[g(\be_j) := \frac{p^{k_j (s-j+1)}}{p^{k_j s}} F_L(k_j,v_j,p^s).\]
Thus, the cost of one iteration is given by 
\[C(v_1,\ldots,v_s) := F_L(k_1,v_1,p^s) \left( \sum_{j=2}^{s+1} f(\be_j) \prod_{\ell=2}^{j-1} g(\be_\ell) \right).\]
This corresponds to  the following asymptotic cost
    \begin{align*}
        & \lim\limits_{n \to \infty}\frac{1}{n}  \log_q\left(  C(v_1, \ldots, v_s) \right) \\
        &= \lim\limits_{n \to \infty}\frac{1}{n}  \log_q\left( F_L(k_1, v_1, p^s) \sum_{j=2}^{s+1} \prod\limits_{\ell=2}^{j-1} p^{-k_\ell(\ell-1)} F_L(k_\ell, v_\ell, p^s) \right)\\
        &= S(R_1,V_1) +  \lim\limits_{n \to \infty}\frac{1}{n}  \log_q\left( \sum_{j=2}^{s+1} p^{\sum\limits_{\ell=2}^{j-1} -k_\ell(\ell-1)}  \prod\limits_{\ell=2}^{j-1} F_L(k_\ell,v_\ell,p^s) \right)\\
        &=  S(R_1,V_1) + \max \left\lbrace  \sum_{\ell = 2}^{j-1} \frac{-R_\ell (\ell-1)}{s} +  S(R_\ell,V_\ell) \mid 2 \leq j \leq s+1 \right \rbrace.
    \end{align*}

The   success probability of one iteration is given by
$$\prod_{i=1}^{s+1} F_L(k_i, v_i, p^s) F_L(n,t,p^s)^{-1}, $$
where, again, we set $k_{s+1}= n-K$ and $v_{s+1}= t-v$.
This corresponds to the asymptotic number of iterations 
$$ S(1,T) -\sum_{i=1}^s S(R_i, V_i)  - S\left(1-R_I, T-\sum_{i=1}^s V_i\right) .$$ 
\end{proof}

\begin{remark}
Another variant of this $s$-blocks algorithm can be obtained by looping over all possible  $(v_1, \ldots, v_s) \in W_s(v)$.
\end{remark}

\subsection{Partial Gaussian Elimination Algorithms}

 In this section we adapt the Partial Gaussian Elimination (PGE)  algorithms from  the Hamming metric to the Lee metric, namely \cite{finiasz, mmt, bjmm}. 
 They all follow the same strategy: by only applying partial Gaussian elimination, depending on a parameter $\ell ,$ we are able to solve the original syndrome decoding instance by solving a smaller one.
 
 More in detail, we find a matrix $\bU \in \zpsk{(n-k_1) \times (n-k_1)}$, such that $$\bU\bH = \begin{pmatrix}
\bA & \Id_{n-K-\ell} \\
\bB & \bzero
\end{pmatrix},$$ 
where $\bA \in \zpsk{(n-K-\ell) \times (K+\ell)}$ and  $\bB \in \zpsk{(K+\ell-k_1) \times (K+\ell)}$. 
Thus, partitioning the error vector $\be$ accordingly, the syndrome decoding problem becomes 
$$ \bU\bH \be^\top =  \begin{pmatrix}
\bA & \Id_{n-K-\ell} \\
\bB & \bzero
\end{pmatrix} \begin{pmatrix}
\be_1^\top \\ \be_2^\top 
\end{pmatrix} = \begin{pmatrix}
\bs_1^\top \\ \bs_2^\top
\end{pmatrix},$$
where $\bs_1 \in \zpsk{n-K-\ell}, \bs_2 \in \zpsk{K+\ell-k_1}$, and $\be_1 \in \zpsk{K+\ell}$ has Lee weight $v$ and $\be_2 \in \zpsk{n-K-\ell}$ has Lee weight $t-v$.
Thus, the initial decoding problem splits into two equations:
\begin{align}
    \bA\be_1^\top + \be_2^\top &= \bs_1^\top, \label{genbrith1cond} \\
    \bB\be_1^\top &= \bs_2^\top. \label{genbirth2cond}
\end{align}
Note that Condition \eqref{genbirth2cond} is again a syndrome decoding instance but with smaller parameters. It is enough to solve this smaller instance and then to compute $\be_2= \bs_1-\be_1\bA^\top$, i.e., such that Condition \eqref{genbrith1cond} is satisfied, and check whether it has the remaining weight $t-v$. The technique to solve the smaller instance will also impact the whole algorithm, as it contributes to the success probability of one iteration. 

In order to solve the smaller syndrome decoding instance, several techniques have been introduced in the Hamming metric, such as Wagner's approach and the representation technique.

\subsubsection{Wagner's Approach}\label{wagner}
The main idea of  Wagner's approach \cite{wagner}, which has been applied to the syndrome decoding problem in \cite{finiasz}, on $a$ levels, is to partition  the searched error vector into $2^a$ subvectors and to store them in a list together with their corresponding partial syndromes. After this, the lists are merged in a special way.    
We will now provide a description of Wagner's approach in more detail. 

The smaller syndrome decoding instance is given by $\bB \in \zpsk{(K+\ell-k_1) \times (K+\ell)}$, $\bs_2 \in \zpsk{K+\ell-k_1}$  and we are searching for an error vector $\be_1 \in \zpsk{K + \ell}$ of Lee weight $v$, such that $\bB \be_1^\top = \bs_2^\top$.

Let $a$ be a positive integer. To ease the notation in Wagner's algorithm  let us assume that $K+\ell$ and $v$ are divisible by $2^a$ (if this is not the case, a small tweak using for example $\left\lfloor \frac{K+\ell}{2^a} \right\rfloor$ can be applied); then, one defines $2^a$ index sets of size $\frac{K+\ell}{2^a}$: $$I_j := \left\{ (j-1)\frac{K+\ell}{2^a}+1, \ldots, j \frac{K+\ell}{2^a} \right\},$$
for all $j \in \{1, \ldots, 2^a\}$.
We then split the input matrix $\bB$ into $2^a$  submatrices $\bB_{j}^{(0)}$, which have columns indexed by $I_j$, i.e.,
$$\bB = \begin{pmatrix}
\bB_1^{(0)} &  \cdots & \bB_{2^a}^{(0)}
\end{pmatrix}$$
and partition the error vector into the corresponding subsets $I_j$ as $$\be_1 = ( \be_1^{(0)}, \ldots, \be_{2^a}^{(0)}).$$
Let us denote by $$I_b^{(i-1)} = I_{2b-1}^{(i-2)} \cup I_{2b}^{(i-2)}$$ for $b \in  \{1,\ldots,2^{a-i+1}\}, i \in \{2,\ldots,a+1\}$ and $I_{b}^{(0)} = I_b$.
Then, we can define $\bB_{2j-1}^{(i-1)} = \bB_{I_{2j-1}^{(i-1)}}.$

We build the initial lists
$$\mathcal{L}_j^{(0)} := \{ \be_{j}^{(0)} \in \zpsk{(K+\ell)/(2^a)} \mid \lweight{\be_{j}^{(0)}} = v/2^a \},$$ 
for $j \in \{1, \ldots, 2^a\}$.
 Let $$0=u_0 \leq u_1 \leq \cdots \leq u_{a-1} \leq u_a = K+\ell-k_1,$$ where $u_i$ is a positive integer indicating the number of entries, on which the merging procedure on level $i$ is based. We will say that $\bx + \by =_u \bz$, if $\bx+\by$ are equal to $\bz$ on the last $u$ positions.
On the $i$-th level one has $2^{a-i+1}$ input lists $\mathcal{L}^{(i-1)}$ and wants to construct $2^{a-i}$ output lists $\mathcal{L}^{(i)}$. These are constructed in the following way 
\begin{align*}
& \mathcal{L}_j^{(i)}  = \mathcal{L}_{2j-1}^{(i-1)} \concat_{\bzero} \mathcal{L}_{2j}^{(i-1)} \coloneqq \\
&  \{(\be_{2j-1}^{(i-1)}, \be_{2j})^{(i-1)} \mid   \be_b^{(i-1)} \in  \mathcal{L}_b^{(i-1)},  \bB_{2j-1}^{(i-1)} (\be_{2j-1}^{(i-1)})^\top =_{u_i} -\bB_{2j}^{(i-1)} (\be_{2j}^{(i-1)})^\top  \},\end{align*}
for $j \in \{1, \ldots, 2^{a-i}-1\},$ whereas for $j=2^{a-i}$ we have
\begin{align*}
&\mathcal{L}_j^{(i)}  = \mathcal{L}_{2j-1}^{(i-1)} \concat_{\bs_2} \mathcal{L}_{2j}^{(i-1)}  \coloneqq \\
&   \{(\be_{2j-1}^{(i-1)}, \be_{2j}^{(i-1)}) \mid   \be_b^{(i-1)} \in  \mathcal{L}_b^{(i-1)}, \bB_{2j-1}^{(i-1)} (\be_{2j-1}^{(i-1)})^\top =_{u_i} \bs_2^\top - \bB_{2j}^{(i-1)} (\be_{2j}^{(i-1)})^\top   \}.\end{align*}

\begin{algorithm}[h!]
\caption{Merge-concatenate}\label{algo:merge-concat}
\begin{flushleft}
Input: The input lists $\mathcal{L}_1, \mathcal{L}_2$, the positive integers $0<u<k$,  $\bB_1, \bB_2 \in \zpsk{k \times n}$ and  $\bs \in \zpsk{k}$. \\ 
Output: $\mathcal{L} = \mathcal{L}_1 \concat_{\bs} \mathcal{L}_2$.
\end{flushleft}
\begin{algorithmic}[1]
\State Lexicographically sort $\mathcal L_1$  according to the last $u$ positions of $\bB_1 \be_1^\top$ for $\be_1 \in \mathcal L_1$. We also store the last $u$ positions of $\bB_1\be_1^\top$ in the sorted list.
\For{$\be_2 \in \mathcal{L}_2$}
\For{$\be_1 \in \mathcal{L}_1$ with $\bB_1\be_1^\top = \bs^\top - \bB_2 \be_2^\top$ on the last $u$ positions}
    \State $\mathcal L = \mathcal L \cup \{(\be_1,\be_2)\}$.
\EndFor
\EndFor
\State Return $\mathcal{L}.$
\end{algorithmic}
\end{algorithm}

\begin{lemma}   \label{lemma:cost-mc}
The asymptotic of the average cost of Algorithm \ref{algo:merge-concat} is
$$\lim\limits_{n \to \infty} \frac{1}{n} \max \left\{\log_{p^s}  \left(\mid \mathcal{L}_1 \mid\right), \log_{p^s}\left( \mid \mathcal{L}_2\mid \right), \log_{p^s}(\mid \mathcal{L}_1 \mid) + \log_{p^s}(\mid \mathcal{L}_2 \mid ) -u \right\}.$$
\end{lemma}
\begin{proof}
 As a first step, we sort the list $\mathcal{L}_1$ according to the last $u$ positions of $\bB_1 \be_1^\top$. Then, for each $\be_2 \in \mathcal{L}_2$, we compute the last $u$ positions of  $\bs^\top-\bB_2\be_2^\top$ and check for a collision in the sorted list $\mathcal{L}_1$.
The asymptotic cost to compute  $\bB_1\be_1^\top$  on the last $u$ positions for each $\be_1 \in \mathcal{L}_1$, and to sort $\mathcal{L}_1$ is the same: \[\lim\limits_{n\to \infty}\frac{1}{n}\log_{p^s}\left(\mid \mathcal{L}_1 \mid\right).\]
 
Similarly, to  compute $\bs^\top-\bB_2\be_2^\top$  on the last $u$ positions for each $\be_2 \in \mathcal{L}_2$, asymptotically  costs \[\lim\limits_{n\to \infty}\frac{1}{n}\log_{p^s}\left(\mid \mathcal{L}_2 \mid\right).\] 

Finally, the expected number of collision in Step 3 is $|\mathcal{L}_1||\mathcal{L}_2|/p^{su}$.
\end{proof}

After the merge on level $a$, one is left with a final list, which contains now  error vectors $\be_1^{(a)} = (\be_{1}^{(0)}, \ldots, \be_{2^a}^{(0)})$ of Lee weight $v$ such that
$$\bB_1^{(0)} (\be_{1}^{(0)})^\top + \cdots + \bB_{2^a}^{(0)}(\be_{2^a}^{(0)})^\top = \bs_2^\top $$
and can hence solve the smaller syndrome decoding instance.

This algorithm based on Wagner's approach succeeds if the target vector $\be_1$ can be split into subvectors $\be_1^{(0)},\ldots,\be_{2^a}^{(0)}$ of length $(K+l)/2^a$ and weight $v/2^a$, and at each level $i$ the syndrome equations are satisfied on some fixed $u_i$ positions. 
In particular, the success probability is then given by the probability that at least one representation of the target vector lives in the lists $\mathcal{L}^{(i)}$. Let us assume that $v \neq 0.$ 

For the base lists, we have the success probability that we can partition the target vector $\be_1$ into vectors of length $(K+\ell)/2^a$ of Lee weight $v/2^a$ is given by 
\begin{equation}
\frac{F_L(n-K-\ell,t-v,p^s)F_L(\frac{K+\ell}{2^a},\frac{v}{2^a},p^s)^{2^a}}{F_L(n,t,p^s)}.    \label{eq:Wagner_SP}
\end{equation}
 For the lists after one merge, i.e., $\mathcal{L}_i^{(1)}$, we now want that the vectors $\be_{i}^{(1)} \in \mathcal{L}_i^{(1)}$ are such that $\bB_i^{(1)}(\be_i^{(1)})^\top =_{u_1} \mathbf{0}^\top$.
Since $\bB_i^{(1)}, \be_i^{(1)}$ are assumed to be uniformly distributed, we can use Lemma \ref{unif}, to get that 
\begin{enumerate}
    \item If $p \nmid v/2^{a-1}$, then $\be_i^{(1)} \not\in p\zpsk{(K+\ell)/2^{a-1}}$ with probability 1.
    \item If $p \mid v/2^{a-1}$, then $\be_i^{(1)} \not\in p\zpsk{(K+\ell)/2^{a-1}}$ with probability
    $$1- \frac{F_L((K+\ell)/2^{a-1}, v/(p2^{a-1}),p^{s-1})}{F_L((K+\ell)/2^{a-1}, v/2^{a-1},p^s)}.$$
\end{enumerate}
Thus, $\be_i^{(1)} \not\in p\zpsk{(K+\ell)/2^{a-1}}$ with high probability and again due to Lemma \ref{unif}, we get that $\bB_i^{(1)}(\be_i^{(1)})^\top$ is uniformly distributed. Thus, $\bB_i^{(1)}(\be_i^{(1)})^\top =_{u_1} \mathbf{0}^\top$ has probability $p^{-su_1}.$
Note that the condition  $\bB_i^{(1)}(\be_i^{(1)})^\top =_{u_1} \mathbf{0}^\top$ 
is imposed on all $2^{a-1}$ vectors (and for the last one a similar condition holds but summing to $\bs_2$). However, ensuring only half of them satisfy the condition is sufficient, 
because with the condition on the next merges we can ensure that the other half satisfies this, as well.   
For example, since $\bB^{(2)}_j (\be^{(2)}_j)^\top = \bB^{(1)}_{2j-1} (\be^{(1)}_{2j-1})^\top + \bB^{(1)}_{2j} (\be^{(1)}_{2j})^\top =_{u_2} \mathbf{0}^\top$ and $\bB_{2j-1}^{(1)}(\be_{2j-1}^{(1)})^\top =_{u_1} \mathbf{0}^\top$, it directly follows that $\bB_{2j}^{(1)}(\be_{2j}^{(1)})^\top =_{u_1} \mathbf{0}^\top$.

For the next merge, we impose that  $\bB_i^{(2)}(\be_i^{(2)})^\top =_{u_2} \mathbf{0}^\top$, knowing that $$\bB_i^{(2)}(\be_i^{(2)})^\top =_{u_1}  \mathbf{0}^\top$$ from the condition on the lists $\mathcal{L}_i^{(1)},$ thus this results in a probability that the above condition is verified of 
$p^{-s(u_2-u_1)}$  and is imposed on $2^{a-3}$ vectors.

We can continue in the same fashion for the subsequent merges, resulting for the algorithm on $a$ levels in a total probability of 
$$ \prod\limits_{i=1}^{a-1} \left( p^{-s(u_i-u_{i-1})}\right)^{2^{a-i-1}}.$$

We expect
by Lemma \ref{unif}
all lists $\mathcal{L}_b^{(i)}$ to be of size $$  L_i = \frac{L_{i-1}^2}{\left(p^s\right)^{u_i-u_{i-1}}}. $$ Hence, for $i \in \{1, \ldots, a\}$ we have that  $$L_i = \frac{L_0^{2^{i}}}{\left(p^s\right)^{\beta(i)}}, $$
where $\beta_i = u_i + \sum_{j=1}^{i-1} 2^{i-j-1} u_j.$

Observe that the final merge on $K+\ell-k_1$ positions, that is 
$$\bB_1^{(a-1)}(\be_1^{(a-1)})^\top = \bs_2^\top -\bB_2^{(a-1)}(\be_2^{(a-1)})^\top$$ is not considered in the computation of the probability, as here we are not losing  any representation of the target vector, since we are not assuming that some parts sum to $0$.

\begin{algorithm}[h!]
\caption{Wagner on $a$ levels}\label{algo:wagner}
\begin{flushleft}
Input: $0 = u_0 \leq u_1 \leq \cdots \leq u_a= K+ \ell-k_1, 0\leq v \leq t, \bH \in \zpsk{(n-k_1) \times n}$ and $\bs \in \zpsk{n-k_1}$. \\ 
Output: $\be \in \zpsk{n}$ with $\text{wt}_L(\be)= t$ and $\bH\be^\top = \bs^\top.$ 
\end{flushleft}
\begin{algorithmic}[1]
\State Choose an $n \times n$ permutation matrix $\bP$.
\State Find $\bU  \in \zpsk{(n-k_1) \times (n-k_1)}$, such that 
$$\bU\bH\bP = \begin{pmatrix}
\bA & \Id_{n-K-\ell} \\ \bB & \bzero
\end{pmatrix}, $$ where $\bA \in \zpsk{(n-K-\ell) \times (K+\ell)}$ and $\bB \in \zpsk{(K+\ell-k_1) \times (K+\ell)}.$
\State Compute $\bU\bs^\top = \begin{pmatrix}
\bs_1^\top \\ \bs_2^\top
\end{pmatrix},$ where $\bs_1 \in \zpsk{n-K-\ell}, \bs_2 \in \zpsk{K+\ell-k_1}.$
\State Partition $I=\{1, \ldots, K+\ell\}$ into $2^a$ subsets of size $\frac{K+\ell}{2^a}$: $$I_j=\left\{(j-1) \frac{K+\ell}{2^a}+1, \ldots, j\frac{K+\ell}{2^a}\right\}, $$
for $j \in \{1, \ldots, 2^a\}.$
\State Partition $\bB$ into $2^a$ submatrices $\bB_{j},$ containing the columns of $\bB$ indexed by $I_j$, for all $j \in \{1, \ldots, 2^a\}.$
\State Set $$\mathcal{L}_j^{(0)} = \{ \be_{j}^{(0)} \in \zpsk{(K+ \ell)/(2^a)}, \lweight{\be_{j}^{(0)}} = v/(2^a) \}$$ for all $j \in \{1, \ldots, 2^a\}.$
 \For{ $ i \in \{1, \ldots a\}$}
    \For{$j \in \{1, \ldots, 2^{a-i}\}$} 
        \State Compute $\mathcal{L}_j^{(i)} = \begin{cases} \mathcal{L}_{2j-1}^{(i-1)} \concat_{\bzero} \mathcal{L}_{2j}^{(i-1)} & \mbox{for } 1 \leq j < 2^{a-i}, \\
\mathcal{L}_{2j-1}^{(i-1)} \concat_{\bs_2} \mathcal{L}_{2j}^{(i-1)} & \mbox{for } j = 2^{a-i}.
\end{cases}$
    \EndFor
 \EndFor
 \For{$\be_1 \in \mathcal{L}_1^{(a)}$}
 \If{$\lweight{\bs_1-\be_1\bA^\top}=t-v $}
 \State Return $\be = \bP(\be_1, \bs_1-\be_1\bA^\top)$
 \EndIf
 \EndFor
 \State Else start over at step 1.
 \end{algorithmic}
\end{algorithm}

In the following, we compute the asymptotic average cost of Wagner's algorithm for level one and level two. 
This is enough, as it turns out that level one gives the optimal complexity, and increasing the levels does not improve the cost, 
as one can see in the comparison in Section \ref{sec:asymptotic}.
\begin{theorem}
\begin{enumerate}
    \item For Wagner's approach on one level, let us fix the real numbers $V,L$ with $0 \leq V \leq T$ and $0 \leq L \leq 1-R_I$ such that $V \leq  M(R_I+L)$ and $T-V \leq M(1-R_I-L)$. 
We fix the internal algorithm parameters $v, \ell$, such that
$ \lim\limits_{n \to \infty} \frac{v}{n}=V, \lim\limits_{n \to \infty} \frac{\ell}{n}=L$.
The asymptotic average time complexity of Wagner's approach on one level is then given by 
\begin{gather*}   S(1,T)-S(1-R_I-L,T-V)-2S((R_I+L)/2, V/2) + \\
  \max\{S((R_I+L)/2,V/2), 2S((R_I+L)/2, V/2)-(R_I+L-R_1)\}.
\end{gather*}
\item For Wagner's approach on two levels, let us fix  additionally the real number $U$, such that $0\leq U \leq R_I+L-R_1$ and we fix the internal algorithm parameters $u_1$, such that $\lim\limits_{n \to \infty} \frac{u_1}{n}=U.$
The asymptotic average time complexity of Wagner's approach on two levels is then given by
\begin{gather*}    S(1,T)-S(1-R_I-L,T-V)-4S((R_I+L)/4, V/4) + U + \\
\max\{S((R_I+L)/4,V/4), 2S((R_I+L)/4, V/4)-U,  \\ 4 S((R_I+L)/4, V/4)-(R_I+L-R_1)-U\}.
\end{gather*}  
\end{enumerate}
\label{thm:wagner}
\end{theorem}

\begin{proof} 
 \begin{enumerate}
     \item The initial lists $\mathcal{L}_1^{(0)}, \mathcal{L}_2^{(0)}$ are  of size $L_0 = F_L\left( \frac{K+\ell}{2}, \frac{v}{2}, p^s\right)$.
     We now have to compute $\mathcal{L}_1^{(1)}= \mathcal{L}_1^{(0)} \concat_{\bs_2} \mathcal{L}_2^{(0)}. $
     Which due to Algorithm \ref{algo:merge-concat} costs $$ \max\{S((R_I+L)/2,V/2),2S((R_I+L)/2, V/2)-(R_I+L-R_1) \}.$$ 
     Finally, we check for all $\be_1 \in \mathcal{L}_1^{(1)}$ if  $\lweight{\bs_1-\be_1\bA^\top}=t-v$, which asymptotically costs 
     $$2S((R_I+L)/2, V/2)-(R_I+L-R_1),$$ which is the asymptotic size of $ \mathcal{L}_1^{(1)}.$
     The success probability is given by $$\frac{F_L(n-K-\ell,t-v,p^s) F_L((K+L)/2,v/2,p^s)^2}{F_L(n,t,p^s)},$$ and hence the asymptotic number of iterations is $$S(1,T)-S(1-R_I-L,T-V)-2S((R_I+L)/2, V/2).$$
     \item In the level 2 case, we  have 4 base lists, $\mathcal{L}_b^{(0)}$ of size $L_0$, for $b \in \{1, \ldots, 4\}.$
     We first merge $\mathcal{L}_1^{(1)}= \mathcal{L}_1^{(0)} \concat_{\mathbf{0}} \mathcal{L}_2^{(0)}$, and  $\mathcal{L}_2^{(1)}= \mathcal{L}_3^{(0)} \concat_{\mathbf{s}_2} \mathcal{L}_4^{(0)}$,
which both cost $$\max\{S((R_I+L)/4,V/4), 2S((R_I+L)/4,V/4) - U \}.$$
The final merge $\mathcal{L}_1^{(2)} = \mathcal{L}_1^{(1)} \concat_{\mathbf{s}_2} \mathcal{L}_2^{(2)} $ costs $2S((R_I+L)/4,V/4)$ which is the asymptotic size of $\mathcal{L}_b^{(1)}$ for $b \in \{1,2\}.$
To check for all $\be_1 \in \mathcal{L}_1^{(2)}$ if  $\lweight{\bs_1-\be_1\bA^\top}=t-v$, asymptotically costs 
     $$ 4 S((R_I+L)/4, V/4)-(R_I+L-R_1)-U,$$ which is the asymptotic size of $ \mathcal{L}_1^{(2)}.$
     
     The success probability is composed of the probability that $\be$ has the assumed weight distribution and that a merging on $u_1$ positions results in the sought-after $\be$, which is given by $$\frac{F_L(n-K-\ell,t-v,p^s) F_L((K+L)/4,v/4,p^s)^4}{F_L(n,t,p^s)p^{u_1}},$$ and hence the asymptotic number of iterations is $$S(1,T)-S(1-R_I-L,T-V)-4S((R_I+L)/4, V/4)+U.$$

 \end{enumerate}

\end{proof}

\subsubsection{Representation Technique}\label{rep}

In contrast to the above description of Wagner's approach, in \cite{bjmm} the authors allow the subvectors of the error vector to overlap; this is called the subset sum representation technique. Note that the approach in \cite{mmt} is similar to \cite{bjmm}, but in the former the vectors have no overlap.  In this section we will adapt the representation technique approach  of \cite{bjmm} to solve the smaller syndrome decoding instance to the Lee metric.

In this algorithm, we create the list $\mathcal L$ of vectors having Lee weight $v$ by merging two lists $\mathcal L_1, \mathcal L_2$ containing vectors of Lee weight $v/2 + \varepsilon$, where $\varepsilon$ represents the overlapping part that cancels out after adding the vectors. Thus, if we want to reach a $\bx \in \zpsk{n}$ of Lee weight $v$, we can write it as $\bx=\by+\bz$,  for $\by$ and $\bz$ having Lee weight $v/2+ \varepsilon$. For $M = \left\lfloor \frac{p^s}{2}\right\rfloor$, this $\varepsilon$ can be between $0$ and $Mn-v$, as we can reach weight $v$ be using two times $v/2$ and the room we are left with for overlappings has weight $Mn-v$. Algorithm \ref{algo:merge} describes the process of merging with overlapping. We apply this merging process at each level, similar to Wagner's approach.

The weight in the overlapping parts is described by the choice of the positive integers $\varepsilon_0,\ldots,\varepsilon_{a-1}$. Let $v_0 = \frac{v}{2^a} + \sum_{b=0}^{a-1} \frac{\varepsilon_b}{2^b}$. We start by computing the base lists
\[\mathcal{L}_j^{(0)} = \left \{  \be_j^{(0)} \in \zpsk{K + \ell } \mid \lweight{\be_j^{(0)}} = v_0 \right\},\]
for $j \in \{1,\ldots,2^a\}$.
The algorithm is divided again into $a$ levels, and at level $i \in \{1,\ldots,a\}$ the vectors have Lee weight $\varepsilon_{i-1}$ in the overlapping part, which cancels out after adding two vectors, and the rest of the merging works as in Wagner's approach. In other words, we merge the lists $\mathcal{L}_{2j-1}^{(i-1)}$ and $\mathcal{L}_{2j}^{(i-1)}$ to obtain a list $\mathcal{L}_{j}^{(i)}$ such that the overlapping part cancels out, where the lists are given by
\begin{align*}
    \mathcal{L}_{j}^{(i)} = \left \{ \be_{j}^{(i)}  \in \zpsk{K + \ell } \mid \lweight{\be_j^{(i)}} = v_i,  \bB {\be_j^{(i)}}^\top =_{u_i} \bs_2^\top   \right \}.
\end{align*}
Here $v_i= \frac{v}{2^{a-i}} + \sum_{b=i}^{a-1} \frac{\varepsilon_{b}}{2^{b-i}}$,
for each $j \in \{1,\ldots,2^{a-i}\}$, and $$1 \leq u_1 \leq \cdots \leq  u_a = K+\ell-k_1$$ is an non-decreasing sequence of integers. 

The resulting list $\mathcal{L}_j^{(i)}$ is expected to have size 
$$L_i = F_L(K+\ell, v_i,p^s)/(p^s)^{u_i}. $$ 

In order to ensure that at least one representation $(\be_1^{(i)}, \be_2^{(i)})$ of a vector $\be_1^{(i+1)} = \be_1^{(i)}+ \be_2^{(i)}$ is in $\mathcal{L}_1^{(i)} \times \mathcal{L}_2^{(i)}$, we have to choose the $u_i$'s in a more restrictive way. For example any value smaller than the logarithm of the number of representations will suffice.

For a vector $\bx$ of weight $v$ of length $K+\ell$ we observe that the number of representations  $\by +\bz$ of weight $v/2 + \varepsilon$ depends on the choice of $\bx$ and thus does not have a generic formula. Thus, we can either work with a lower bound or with an expected number of representations. Using a lower bound ensures that for any error vector the algorithm works; however, on average, the expected number of representations should be enough to ensure this as well.

Let us denote by $C(w,\sigma,\eta)$ the number of compositions of $w$ into $\sigma$ parts such that each part is at most $\eta$, which can be computed as 
\[
C(w,\sigma,\eta)=\sum_{j=0}^{\min\{\sigma,\lfloor\frac{w-\sigma}{\eta}\rfloor\}}{ (-1)^j\binom{\sigma}{j} \binom{w-j\eta-1}{\sigma-1}},
\]
as shown in \cite{Abramson1976}.

\begin{lemma}\label{nrrep}
Let $\bx \in \zpsk{K+\ell}$ with $\lweight{\bx}=v$. The number of representations $\bx=\by+\bz$ with $\lweight{\by}=\lweight{\bz}=v/2+\epsilon$ is greater than or equal to 
$$2 \sum\limits_{\sigma=1}^{\min\{\varepsilon, K+\ell\}} \binom{K+\ell}{\sigma}C(\varepsilon, \sigma, \lfloor p^s/4 \rfloor).$$
\end{lemma}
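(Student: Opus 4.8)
The plan is to exhibit, by hand, a large family of representations $\bx = \by + \bz$ with $\lweight{\by}=\lweight{\bz}=v/2+\varepsilon$, to check that the members of this family are pairwise distinct, and to count them; since we only enumerate a sub-family, the resulting count is automatically a lower bound on the total number of representations. Concretely, I would build each representation by perturbing a single fixed \emph{balanced base split} of $\bx$ by an \emph{overlap vector} whose support and weights encode the three combinatorial factors appearing in the claim.

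First I would fix one base split $\bx=\by_0+\bz_0$ with $\lweight{\by_0}=\lweight{\bz_0}=v/2$ (this forces $v$ to be even, which is implicit in the statement since $v/2$ occurs). Such a split exists because, coordinate by coordinate, one can write $x_i=(\by_0)_i+(\bz_0)_i$ with no cancellation, so that $\lweight{(\by_0)_i}+\lweight{(\bz_0)_i}=\lweight{x_i}$, and the freedom in how each $x_i$ is distributed between the two summands lets one balance the two total weights to $v/2$ each. Next, for every subset $S\subseteq\{1,\ldots,K+\ell\}$ with $\card{S}=\sigma$ and every composition $(\delta_j)_{j\in S}$ of $\varepsilon$ into $\sigma$ parts with $1\le\delta_j\le\lfloor p^s/4\rfloor$ — these being counted by $\binom{K+\ell}{\sigma}$ and by $C(\varepsilon,\sigma,\lfloor p^s/4\rfloor)$, the latter as in Proposition \ref{nrofvectors} — I would pick $c_j\in\zps$ with $\lweight{c_j}=\delta_j$, form the vector $\bc$ supported on $S$ with those entries and $\bzero$ elsewhere, and set $\by=\by_0+\bc$, $\bz=\bz_0-\bc$. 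Then $\by+\bz=\bx$ holds identically, and $\lweight{\by}=\lweight{\bz}=v/2+\varepsilon$ will follow provided the perturbation is weight-additive on both summands at once. The global factor $2$ records a deliberately conservative count of the sign freedom in the $c_j$: there are in fact $2^\sigma$ admissible sign patterns, but keeping a single factor of $2$ already produces distinct representations while keeping the bound clean.

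The heart of the argument, and the point where the constant $\lfloor p^s/4\rfloor$ is forced, is the weight-additivity claim. Writing elements of $\zps$ by their balanced representatives in $\{-\lfloor p^s/2\rfloor,\ldots,\lfloor p^s/2\rfloor\}$, one has $\lweight{a+c}=\lweight{a}+\lweight{c}$ exactly when $a$ and $c$ lie on the same side of $0$ and $\lweight{a}+\lweight{c}\le\lfloor p^s/2\rfloor$. I would therefore prove a short gadget lemma: for any coordinate value $x_j$ and any $\delta_j\le\lfloor p^s/4\rfloor$, the base entries $(\by_0)_j,(\bz_0)_j$ and the overlap $c_j$ can be chosen so that adding $+c_j$ to $(\by_0)_j$ and $-c_j$ to $(\bz_0)_j$ moves \emph{both} entries away from $0$ without wraparound, increasing each of the two Lee weights by exactly $\delta_j$. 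The threshold $\lfloor p^s/4\rfloor$ is precisely the largest per-coordinate weight for which such a simultaneous two-sided increase can be guaranteed uniformly, which is why it, rather than $\lfloor p^s/2\rfloor$, appears in $C(\varepsilon,\sigma,\lfloor p^s/4\rfloor)$.

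I expect this gadget lemma to be the main obstacle, since it is the only place where the arithmetic of the Lee metric (as opposed to mere bookkeeping) enters, and getting the exact threshold requires a careful case analysis on the balanced representatives. The remaining steps are routine: distinctness of the constructed pairs across different triples $(S,(\delta_j)_j,\text{signs})$ — which rules out overcounting and is what makes the enumeration a genuine lower bound — follows because $S=\supp{\bc}$, the $\delta_j$ are the Lee weights of $\bc$ on $S$, and the signs are recoverable from $\by-\by_0=\bc$; and checking that a single balanced base $(\by_0,\bz_0)$ can be fixed compatibly with all admissible overlap supports $S$ is immediate once the gadget lemma provides, at each coordinate, base values that tolerate perturbation of any weight up to $\lfloor p^s/4\rfloor$.
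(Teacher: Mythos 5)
Your construction is the same one the paper uses: a weight-balanced base split $\bx=\by_0+\bz_0$ with $\lweight{\by_0}=\lweight{\bz_0}=v/2$, perturbed by an antisymmetric overlap $(+\bc,-\bc)$ whose support size $\sigma$ and per-coordinate Lee weights $\delta_j\le\lfloor p^s/4\rfloor$ produce the factors $\binom{K+\ell}{\sigma}$ and $C(\varepsilon,\sigma,\lfloor p^s/4\rfloor)$, with a residual factor $2$. The difficulty is that the gadget lemma you defer --- that at \emph{any} coordinate one can choose base entries and $c_j$ so that $+c_j$ and $-c_j$ increase \emph{both} Lee weights by $\delta_j$ --- fails exactly where it is needed. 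Because the Lee weight is subadditive, the requirement $\lweight{\by_0}+\lweight{\bz_0}=\lweight{\bx}$ forces per-coordinate additivity of the base split, so at every coordinate the balanced representatives of $(\by_0)_j$ and $(\bz_0)_j$ lie weakly on the \emph{same} side of $0$. But for $+c_j$ to move $(\by_0)_j$ away from $0$ and simultaneously $-c_j$ to move $(\bz_0)_j$ away from $0$, those two entries must lie weakly on \emph{opposite} sides. The two conditions are compatible only when at least one of $(\by_0)_j,(\bz_0)_j$ vanishes, and even then only with the headroom $\lweight{x_j}+\delta_j\le\lfloor p^s/2\rfloor$. So the overlap cannot be placed freely: if, say, $\bx$ is supported on a single coordinate of weight $v$, every balanced split has both base entries nonzero and same-signed there, no admissible $c_j$ exists at that coordinate, and the overlap must be routed entirely through coordinates where one base entry is zero (most cleanly, outside $\supp{\bx}$). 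This contradicts your claim that the gadget works ``for any coordinate value $x_j$'' and is also not reflected by the unrestricted $\binom{K+\ell}{\sigma}$ in the target bound.

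For comparison, the paper's proof uses the identical decomposition $(\by_1+\by_2)+(\bz_1+\bz_2)$ with $\bz_2=-\by_2$, obtains the threshold $\lfloor p^s/4\rfloor$ from the constraint that $\alpha+\beta$ stay in $[0,M]$ (resp.\ $[-M,0]$) for $M=\lfloor p^s/2\rfloor$, and attributes the factor $2$ to the two ways of halving a nonzero entry of $\bx$ rather than to sign freedom in $\bc$; but it only verifies $\lweight{\by_1+\by_2}=v/2+\varepsilon$ and never checks the $\bz$ side --- which is precisely the step your gadget lemma was meant to supply. Your instinct to demand weight-additivity on both summands simultaneously is correct; to close the argument along these lines you would need either to restrict the overlap support (and the count) to coordinates where the base split is one-sided, or to use a perturbation that does not act antisymmetrically on the two halves.
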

\begin{proof}
Let us denote by $M= \lfloor \frac{p^s}{2} \rfloor.$ To get a lower bound, one construction of a representation is enough.
 One can write each representation $\by+\bz$ as 
$$ (\by_1+\by_2)+ (\bz_1+ \bz_2),$$
where $\lweight{\by_1} = \lweight{\bz_1}=v/2$ and $\lweight{\by_2}=\lweight{\bz_2}=\varepsilon.$
We will focus on a particular representation, where we half the weight of each entry of $\bx$. Thus, it is enough to consider one entry of $\bx$, e.g. the first $\bx_1$. Clearly, the choice of $\by_1$ then completely defines the choice of $\bz_1$  and also the choice of $\by_2$ determines the choice of $\bz_2= -\by_2$. For each non-zero entry (and thus also for the first entry) of $\by_1$ there exist always two choices, namely for the first entry these are  $
    \lfloor \bx_1/2 \rfloor$ and $
    \lceil (\bx_1-p^s)/2 \rceil.
$
The choice of $\by_2$ is then restricted, since we want to add weight by computing $\by_1+\by_2$, but for certain choices of $\by_2$ this reduces the weight. In fact, for any choice of first entry of $\by_1$  (assuming it is non-zero), we have at least $\lfloor M/2 \rfloor$ choices for the first entry of $\by_2$. To see this, imagine the first entry of $\by_1$ called $\alpha$ to be in $[-M,M].$ We denote by $\beta$ the first entry of $\by_2$. If $\alpha>0$, then in order for $\beta+\alpha \in [0, M]$ we must choose $\beta \in [0, \lfloor M/2 \rfloor]$, and if $\alpha <0$, then in order for $\beta + \alpha \in [-M,0]$ we must choose $\beta \in [-\lfloor M/2 \rfloor, 0]$. Thus, with this particular construction of $\by$ and $\bz$ we are allowed to compose the weight $\varepsilon$ into $\sigma$ parts, denoting the support size, of size up to $\lfloor M/2 \rfloor$.
 
\end{proof}

Thus, using this lower bound we choose $$ u_i = \left\lfloor \log_{p^s} \left(2 \sum\limits_{\sigma=1}^{\min\{\varepsilon, K+\ell\}} \binom{K+\ell}{\sigma}C(\varepsilon, \sigma, \lfloor p^s/4 \rfloor)\right) \right\rfloor.$$
As mentioned earlier, another approach for the choice of $u_i$ can be based on the expected number of representations. However, due to the difficulty of calculating the expected number of representations, we leave this approach for future works.

\begin{figure}[ht]
    \centering
    
    \begin{tikzpicture}[emptybox/.style={shape=rectangle, rounded corners,
    draw, align=center, minimum height = 15pt, minimum width = 10.8cm, anchor =west}, 
    shaded/.style = {shape=rectangle, minimum height = 14pt, fill = black!20, minimum width = 1cm, anchor = west},
    shadede/.style = {shape=rectangle, minimum height = 14pt, fill = black!20, minimum width = 0.5cm, anchor = west},
    shadede1/.style = {shape=rectangle, minimum height = 14pt, fill = black!20, minimum width = 1.1cm, anchor = west}]

\node[left] at (-5,0) {$\be_1^{(0)} = $};
\node[emptybox] at (-5,0) {};
\node[shaded] at (-4.9,0) {$\frac{v}{4}$};
\node[shadede, pattern= north east hatch, pattern color=blue!15, hatch distance=8pt, hatch thickness=2pt] at (-3.8,0) {$\varepsilon_0$};
\node[shadede] at (-1.05,0) {$\frac{\varepsilon_1}{2}$};

\node[left] at (-5,-0.6) {$\be_2^{(0)} = $};
\node[emptybox] at (-5,-0.6) {};
\node[shaded] at (-2.7,-0.6) {$\frac{v}{4}$};
\node[shadede, pattern= north east hatch, pattern color=blue!15, hatch distance=8pt, hatch thickness=2pt] at (-3.8,-0.6) {$\varepsilon_0$};
\node[shadede] at (-1.6,-0.6) {$\frac{\varepsilon_1}{2}$};

\node[left] at (-5,-1.2) {$\be_3^{(0)} = $};
\node[emptybox] at (-5,-1.2) {};
\node[shaded] at (0.1,-1.2) {$\frac{v}{4}$};
\node[shadede, pattern= north east hatch, pattern color=blue!15, hatch distance=8pt, hatch thickness=2pt] at (1.2,-1.2) {$\varepsilon_0$};
\node[shadede] at (-1.05,-1.2) {$\frac{\varepsilon_1}{2}$};

\node[left] at (-5,-1.8) {$\be_4^{(0)} = $};
\node[emptybox] at (-5,-1.8) {};
\node[shaded] at (2.3,-1.8) {$\frac{v}{4}$};
\node[shadede, pattern= north east hatch, pattern color=blue!15, hatch distance=8pt, hatch thickness=2pt] at (1.2,-1.8) {$\varepsilon_0$};
\node[shadede] at (-1.6,-1.8) {$\frac{\varepsilon_1}{2}$};

\node[left] at (-5,-3.6) {$\be_1^{(1)} = $};
\node[right] at (-5,-3.6) {$\be_1^{(0)} + \be_2^{(0)}$};
\node[left] at (-5,-4.2) {=};
\node[emptybox] at (-5,-4.2) {};
\node[shaded] at (-4.9,-4.2) {$\frac{v}{4}$};
\node[shaded] at (-2.7,-4.2) {$\frac{v}{4}$};
\node[shadede1, pattern= north east hatch, pattern color=blue!15, hatch distance=8pt, hatch thickness=2pt] at (-1.6,-4.2) {$\varepsilon_1$};

\node[left] at (-5,-4.8) {$\be_2^{(1)} = $};
\node[right] at (-5,-4.8) {$\be_3^{(0)} + \be_4^{(0)}$};
\node[left] at (-5,-5.4) {=};
\node[emptybox] at (-5,-5.4) {};
\node[shaded] at (0.1,-5.4) {$\frac{v}{4}$};
\node[shaded] at (2.3,-5.4) {$\frac{v}{4}$};
\node[shadede1, pattern= north east hatch, pattern color=blue!15, hatch distance=8pt, hatch thickness=2pt] at (-1.6,-5.4) {$\varepsilon_1$};

\node[left] at (-5,-7.2) {$\be_1^{(2)} = $};
\node[right] at (-5,-7.2) {$\be_1^{(1)} + \be_2^{(1)}$};
\node[left] at (-5,-7.8) {=};
\node[emptybox] at (-5,-7.8) {};
\node[shaded] at (-4.9,-7.8) {$\frac{v}{4}$};
\node[shaded] at (-2.7,-7.8) {$\frac{v}{4}$};
\node[shaded] at (0.1,-7.8) {$\frac{v}{4}$};
\node[shaded] at (2.3,-7.8) {$\frac{v}{4}$};

\draw[dashed, rounded corners] (-6.2,-6.2) rectangle (6.2,-8.4);
\node[below] at (0,-6.3) {Level 2};
\draw[dashed, rounded corners] (-6.2,-2.6) rectangle (6.2,-6);
\node[below] at (0,-2.7) {Level 1};
\draw[dashed, rounded corners] (-6.2,1) rectangle (6.2,-2.4);
\node[below] at (0,0.9) {Level 0};

\end{tikzpicture}

    \caption{Illustration of the error vectors at each level of the representation technique algorithm for two levels. At each level, the striped region represents the overlapping part.}
    \label{fig:rep_tech}
\end{figure}

\begin{algorithm}[h!]
\caption{Merge}\label{algo:merge}
\begin{flushleft}
Input: The input lists $\mathcal{L}_1, \mathcal{L}_2$, the positive integers $0<u<k$ and  $0 \leq v \leq \left\lfloor \frac{p^s}{2} \right\rfloor n$, the matrix  $\bB \in \zpsk{k \times n}$ and the syndrome $\bs \in \zpsk{k}$. \\ 
Output: $\mathcal{L} = \mathcal{L}_1 \bowtie \mathcal{L}_2$.
\end{flushleft}
\begin{algorithmic}[1]
\State Lexicographically sort $\mathcal L_1$  according to the last $u$ positions of $\bB \be_1^\top$ for $\be_1 \in \mathcal L_1$. We also store the last $u$ positions of $\bB\be_1^\top$ in the sorted list. 
\For{$\be_2 \in \mathcal{L}_2$}
\For{$\be_1 \in \mathcal{L}_1$ with $\bB \be_1^\top = \bs^\top - \bB \be_2^\top$ on the last $u$ positions}
    \If{$\lweight{\be_1+\be_2} = v$}
        \State $\mathcal L = \mathcal L \cup \{\be_1 + \be_2\}$.
    \EndIf
\EndFor
\EndFor
\State Return $\mathcal{L}.$
\end{algorithmic}
\end{algorithm}

\begin{lemma} \label{lemma:merge-cost}
The asymptotic average time complexity of the merge algorithm (Algorithm \ref{algo:merge}) is given by
$$ \lim_{n \to \infty} \frac{1}{n} \max\{ \log_{p^s}(L_1), \log_{p^s}(L_2),  \log_{p^s}(L_1) + \log_{p^s}(L_2) -u \} , $$ where $L_i = |\mathcal L_i |$ for $i = 1,2$. 
\end{lemma}
\begin{proof}
We sort the list $\mathcal{L}_1$  while computing the vectors $\bB \be_1^\top$ for each $\be_1 \in \mathcal{L}_1$.  Once we know the $u$ positions of this vector, we append $\be_1^\top$ to the sorted list containing previously computed vectors. Let $L_1$ be the size of $\mathcal{L}_1$. Then, the average cost of sorting $\mathcal{L}_1$ according to the last $u$ positions of $\bB \be_1^\top$ is 
asymptotically $$\lim_{n \to \infty} \frac{1}{n} \left( \log_{p^s}(L_1) \right). $$

Then for each $\be_2 \in \mathcal L_2$, we compute the last $u$ positions of $\bs^\top - \bB \be_2^\top$ and find $\be_1$ in the sorted list $\mathcal L_1$ satisfying $\bB \be_1^\top =_u \bs^\top - \bB \be_2^\top$.
This step will cost on average 
$$\lim_{n \to \infty} \frac{1}{n} \left( \log_{p^s}(L_2) \right). $$ 
 
Next, for each collision we compute the vector $\be_1 + \be_2$, which is asymptotically the most dominant step. Using the estimation for the expected number of collisions, we get that the overall average cost of the algorithm is asymptotically given by 
 
$$ \lim\limits_{n \to \infty} \frac{1}{n} \left(\log_{p^s}(L_1) + \log_{p^s}(L_2) -u\right). $$ 
\end{proof}

Algorithm \ref{algo:merge} is applied for all levels $i<a$ whereas, on the last level $a$, we do a similar merging algorithm, with a small twist: instead of Step 5 of Algorithm \ref{algo:merge}, we check whether 
$$\text{wt}_L(\bs_1-(\be_1 + \be_2)\bA^\top)=t-v,$$
which is necessary to guarantee that the considered solution $(\be_1+ \be_2)$ of the smaller SDP instance can be extended to a solution of the whole SDP instance,  being $(\be_1+\be_2, \bs_1-(\be_1 + \be_2)\bA^\top)$. Taking this condition into account inside the merge algorithm prevents us from storing the final list, thus decreasing the cost of the algorithm.
Thus, in the last level, as soon as this condition is satisfied, one can abort and return the found error vector; we denote this merging process as $$ \mathcal{L}_1 \bowtie_{\bs_2,\bs_1,\bA,w}^{(a)} \mathcal{L}_2.$$

\begin{algorithm}[h!]
\caption{Last Merge}\label{algo:last-merge}
\begin{flushleft}
Input: The input lists $\mathcal{L}_1, \mathcal{L}_2$, the positive integers $w,v,0<u<k$,  $\bB \in \zpsk{k \times n}, \bs_2 \in \zpsk{k}$ and $\bs_1 \in \zpsk{r}, \bA \in \zpsk{r \times 2n}$. \\ 
Output: $\be \in \mathcal{L}_1 \bowtie_{\bs_2,\bs_1,\bA,w}^{(a)} \mathcal{L}_2$.
\end{flushleft}
\begin{algorithmic}[1]
\State Lexicographically sort $\mathcal L_1$  according to the last $u$ positions of $\bB \be_1^\top$ for $\be_1 \in \mathcal L_1$. We also store the last $u$ positions of $\bB\be_1^\top$ in the sorted list. 
\For{$\be_2 \in \mathcal{L}_2$}
\For{$\be_1 \in \mathcal{L}_1$ with $\bB \be_1^\top = \bs_2^\top - \bB \be_2^\top$ on the last $u$ positions}
    \If{$\lweight{\be_1+\be_2} = v$ and $\lweight{\bs_1-(\be_1+ \be_2)\bA^\top}=w$}
    \State Return $(\be_1,\be_2, \bs_1-(\be_1+ \be_2)\bA^\top)$.
    \EndIf
\EndFor
\EndFor
\end{algorithmic}
\end{algorithm}

Note that the cost of this last merge is the same as the cost of Algorithm \ref{algo:merge}.
\begin{corollary} \label{last-merge-cost}
The asymptotic average time complexity of the last  merge (Algorithm \ref{algo:last-merge}) is given by
$$ \lim_{n \to \infty} \frac{1}{n} \max\{ \log_{p^s}(L_1), \log_{p^s}(L_2),  \log_{p^s}(L_1) + \log_{p^s}(L_2) -u \} , $$ and $L_i = |\mathcal L_i |$ for $i = 1,2$. 
\end{corollary}

\begin{algorithm}[h!]
\caption{Representation Technique on $a$ levels}\label{algo:bjmm}
\begin{flushleft}
Input: $0\leq \ell\leq n-K$, $0 \leq \varepsilon_i \leq M(K+\ell)-v_{i+1}$, for all $i \in \{0, \ldots, a-1\}$  where $v_i = v/(2^{a-i}) + \sum_{b=i}^{a-1} \frac{\varepsilon_{b}}{2^{b-i}}$,  $0 = u_0 \leq u_1 \leq \cdots \leq u_a= K+\ell-k_1, t,v, \bH \in \zpsk{(n-k_1) \times n}$ and $\bs \in \zpsk{n-k_1}$. \\ 
Output: $\be \in \zpsk{n}$ with $\text{wt}_L(\be)= t$ and $\bH\be^\top = \bs^\top.$ 
\end{flushleft}
\begin{algorithmic}[1]
\State Choose an $n \times n$ permutation matrix $\bP$.
\State Find $\bU  \in \zpsk{(n-k_1) \times (n-k_1)}$, such that 
$$\bU\bH\bP = \begin{pmatrix}
\bA & \Id_{n-K-\ell} \\ \bB & \bzero
\end{pmatrix}, $$ where $\bA \in \zpsk{(n-K-\ell) \times (K+\ell)}$ and $\bB \in \zpsk{(K+\ell-k_1) \times (K+\ell)}.$
\State Compute $\bU\bs^\top = \begin{pmatrix}
\bs_1^\top \\ \bs_2^\top
\end{pmatrix},$ where $\bs_1 \in \zpsk{n-K-\ell}, \bs_2 \in \zpsk{K+\ell-k_1}.$
\State Set $$\mathcal{L}_j^{(0)} = \left\{ \be_{j}^{(0)} \in \zpsk{K+ \ell} \mid \lweight{\be_{j}^{(0)}} = \frac{v}{2^a} + \sum_{b=0}^{a-1} \frac{\varepsilon_{b}}{2^b} \right\}$$ for all $j \in \{1, \ldots, 2^a\}.$
 \For{ $ i \in \{1, \ldots a-1\}$}
    \For{$j \in \{1, \ldots, 2^{a-i}\}$} 
        \State Compute $\mathcal{L}_j^{(i)} =  \mathcal{L}_{2j-1}^{(i-1)} \bowtie \mathcal{L}_{2j}^{(i-1)}$ using Algorithm \ref{algo:merge}.
    \EndFor
 \EndFor
    \State Compute $\be  \in \mathcal{L}_1^{(a-1)} \bowtie_{\bs,_2\bs_1,\bA,t-v}^{(a)}  \mathcal{L}_2^{(a-1)}.$ 
        \State Return $\bP\be.$
        \State Else start over at step 1.
 \end{algorithmic}
\end{algorithm}

In the following, we compute the asymptotic average cost of the representation technique for one level and two levels.  In fact, in the case of free codes using two levels gives only a small improvement on using one level and for non-free codes using one level is the optimal choice,
as one can see in the comparison in Section \ref{sec:asymptotic}.

\begin{theorem} \label{thm:rep}
\begin{enumerate}
\item Let us  consider the representation technique on one level, here we need to fix the real numbers $V,L,E$, such that  $0 \leq E \leq  (R_I+L) M-V$.
We fix the internal algorithm parameters $v, \ell, \varepsilon_0$, such that
$ \lim\limits_{n \to \infty} \frac{v}{n}=V, \lim\limits_{n \to \infty} \frac{\ell}{n}=L, \lim\limits_{n \to \infty} \frac{\varepsilon_0}{n}=E$.
If $V=0$, then the asymptotic average time complexity is given by $    S(1,T)-S(R_I+L,V)-S(1-R_I-L,T-V).$
If $V>0$, then the asymptotic average time complexity is given by
\begin{gather*} S(1,T)-S(R_I+L,V)-S(1-R_I-L,T-V) \\
 +
 \max \{S(R_I+L, V/2+E), 2S(R_I+L,V/2+E)-(R_I+L-R_1) \}.
 \end{gather*}  
\item If we consider the representation technique on two levels, we fix the real numbers $V,L,E_0, E_1$ and $U$, such that $0 \leq E_1 \leq (R_I+L)M-V, 0 \leq E_0 \leq (R_I+L)M-V/2-E_1$ and $0 \leq U \leq R_I+L-R_1$.
We fix the internal algorithm parameters $\varepsilon_0, \varepsilon_1$ and $u_1$, such that 
$ \lim\limits_{n \to \infty} \frac{\varepsilon_0}{n} = E_0, \lim\limits_{n \to \infty} \frac{\varepsilon_1}{n} = E_1$ and $\lim\limits_{n \to \infty} \frac{u_1}{n} =U,$ which we recall is by Lemma \ref{nrrep} fixed.
Then the asymptotic average time complexity is given by
\begin{gather*}    S(1,T)-S(R_I+L,V)-S(1-R_I-L,T-V) +   \\
\max \{S(R_I+L, V/4+E_0+E_1/2),  S(R_I+L, V/2+E_1)-U, \\
2S(R_I+L, V/4+E_0+E_1/2) -U, \\ 2S(R_I+L,V/2+E_1)-(R_I+L-R_1)-U \}. 
\end{gather*} 
\end{enumerate}
\end{theorem}

\begin{proof}
 \begin{enumerate}
     \item The initial lists   $\mathcal{L}_1^{(0)}$ and $\mathcal{L}_2^{(0)}$ are both of size $L_0 = F_L(K+\ell,v_0, p^s)$, where $v_0 = \frac{v}{2} +\varepsilon$.
     We compute $\mathcal{L}_1^{(1)}= \mathcal{L}_1^{(0)} \bowtie \mathcal{L}_2^{(0)} $  using Algorithm \ref{algo:last-merge} on the following inputs: the positive integers $v,t-v,u=K+\ell-k_1$ and $\bB \in \zpsk{(K+\ell-k_1) \times (K+\ell)}, \bs_2 \in \zpsk{K+\ell-k_1}$. This has the asymptotic cost of \[\max\{S(R_I+L,V/2+E), 2S(R_I+L,V/2+E)-(R_I+L-R_1)\}.\]
     
     Finally, we note that the probability that the error vector has the assumed weight distribution of the representation technique is 
 $$ F_L\left(K+\ell, v,p^s\right)F_L(n-K-\ell,t-v,p^s)F_L(n,t,p^s)^{-1}.$$ Hence, the asymptotic number of iterations is given by \[S(1,T)-S(R_I+L,V)-S(1-R_I-L,T-V).\]
 \item The initial lists $\mathcal{L}_j^{(0)}$ are all of size $L_0 = F_L(K+\ell,v_0, p^s)$, where $v_0 = \frac{v}{4} + \varepsilon_0 +\varepsilon_1/2$.
 On the first level,  the two merges $\mathcal{L}_1^{(1)}= \mathcal{L}_1^{(0)} \bowtie \mathcal{L}_2^{(0)}$ and $\mathcal{L}_2^{(1)}=\mathcal{L}_3^{(0)} \bowtie \mathcal{L}_4^{(0)}$ on $u_1$ positions cost asymptotically
  $$\max\{S(R_I+L, V/4+E_0+E_1/2), 2S(R_I+L, V/4+E_0+E_1/2) -U\}.$$
  The lists  $\mathcal{L}_b^{(1)}$ have asymptotic sizes $ S(R_I+L, V/2+E_1)-U.$
  
 On the second level we compute $\mathcal{L}_1^{(2)}= \mathcal{L}_1^{(1)} \bowtie \mathcal{L}_2^{(1)} $  using Algorithm \ref{algo:last-merge} on the following inputs: the positive integers $v,t-v,u=K+\ell-k_1$ and $\bB \in \zpsk{(K+\ell-k_1) \times (K+\ell)}, \bs_2 \in \zpsk{K+\ell-k_1}$. This has the asymptotic cost of \[\max\{(S(R_I+L,V/2+E_1)-U),2S(R_I+L,V/2+E_1)-(R_I+L-R_1)-U\}.\]
 Note that the addition of $U$ results from the fact, that we already did merge on $u_1$ positions, thus we only need to merge on $K+\ell-k_1-u_1$ positions. 
 The success probability stays the same as in the first level, since it only considers the splitting into the weights $v$ and $t-v.$
 \end{enumerate}

\end{proof}

\subsection{Lee-BJMM Algorithm}
The ISD algorithm BJMM \cite{bjmm} in the Hamming metric is a mixture of the two techniques considered above. We consider the case of two levels and three levels. In both the cases, merging at the first level is done using Algorithm \ref{algo:merge-concat} (similar to Wagner's algorithm), and merging at later levels is done using Algorithm \ref{algo:merge} (similar to the representation technique). Note also, that Lee BJMM at one level is same as Wagner's approach on one level. Note that the optimal choice for the Hamming metric is three levels. For the Lee metric, we observe in Section \ref{sec:asymptotic}, that two levels is the optimal choice.

\begin{corollary}
\begin{enumerate}
\item Let us consider Lee-BJMM on two levels, and fix the real numbers $V,L,E$, and $U$ such that 
 $0 \leq E \leq  (R_I+L) M-V,$ and $0\leq U \leq R_I+L-R_1$.
We fix the internal algorithm parameters $v, \ell, \varepsilon_0$ and $u_1$, such that
$ \lim\limits_{n \to \infty} \frac{v}{n}=V, \lim\limits_{n \to \infty} \frac{\ell}{n}=L, \lim\limits_{n \to \infty} \frac{\varepsilon_0}{n}=E$ and $\lim\limits_{n \to \infty} \frac{u_1}{n}=U$,  which we recall is by Lemma \ref{nrrep} fixed.
The asymptotic average time complexity of BJMM on two levels is given by

\begin{gather*} S(1,T)-S(R_I+L,V)-S(1-R_I-L,T-V) \\
 +
 \max \{S((R_I+L)/2, V/4+ E/2), S(R_I+L,V/2+E)-U, \\  2(S(R_I+L, V/2+E)-U)-(R_I+L-R_1)+U \}. 
 \end{gather*} 
\item Let us consider now Lee-BJMM on three levels, and fix the real numbers $V,L$,  $E_0, E_1$, and $U_0,U_1$ such that  $0 \leq E_1 \leq  (R_I+L) M-V, 0 \leq E_0 \leq (R_I+L)M- V/2-E_1$ and $0\leq U_0 \leq U_1 \leq R_I+L-R_1$.
We fix the internal algorithm parameters $v, \ell, \varepsilon_0, \varepsilon_1, u_0$ and $u_1$, such that
$ \lim\limits_{n \to \infty} \frac{v}{n}=V, \lim\limits_{n \to \infty} \frac{\ell}{n}=L, \lim\limits_{n \to \infty} \frac{\varepsilon_i}{n}=E_i$ and $\lim\limits_{n \to \infty} \frac{u_i}{n}=U_i$ for $i \in \{0,1\}$, which we recall are by Lemma \ref{nrrep} fixed.
The asymptotic average time complexity  of BJMM at three level is given by
\begin{gather*} S(1,T)-S(R_I+L,V)-S(1-R_I-L,T-V) \\
 +
 \max \{S((R_I+L)/2, V/8+E_1/4 + E_0/2), S(R_I+L,V/4+E_1/2+E_0)-U_0, \\ 2S(R_I+L, V/4+E_1/2+E_0)-U_1-U_0, S(R_I+L, V/2 + E_1)-U_1, \\ 2S(R_I+L, V/2+E_1)-(R_I+L-R_1)-U_1 \}.
 \end{gather*} 

\end{enumerate}
\end{corollary}

The proof works similar to Theorem \ref{thm:wagner} and Theorem \ref{thm:rep}, where in the case of two levels, we start with 4 base lists of size $F_L((K+\ell)/2,v/4+\varepsilon_0/2)$ and in the case of three levels we start with 8 base lists of size $F_L((K+\ell)/2,v/8+\varepsilon_1/4+\varepsilon_0/2).$

\section{Comparison}\label{sec:asymptotic}

In this section, we numerically compute the asymptotic complexity of the Lee ISD algorithms proposed in Section \ref{sec:ISD}. For our analysis we will consider codes that achieve the asymptotic Gilbert-Varshamov bound in Theorem \ref{asympt_GV}.
We will exemplary consider $q=7^2$ and assume that $R_1= \lambda R$, for $\lambda \in \{ 0.5, 0.75, 1\}$ and that in general it holds that $$ R \leq \min\left\{1, \lambda R + \frac{(1-\lambda) Rs}{s-1} \right\} \leq R_I \leq \min \left\{(1- \lambda)Rs + \lambda R,1\right\}.$$ 
In \cite[Theorem 22]{modules} it is shown that a code generated by a random matrix is  free and achieves the Gilbert-Varshamov bound with high probability. We will hence assume that $R_I=\min\left\{1, \lambda R + \frac{(1-\lambda) Rs}{s-1} \right\}$  in our asymptotic analysis. Note that for such choices, not all rates $R$ may be achieved.

\begin{figure}
        \centering
        \begin{subfigure}[b]{0.7\textwidth}
            \centering
            \includegraphics[width=\textwidth]{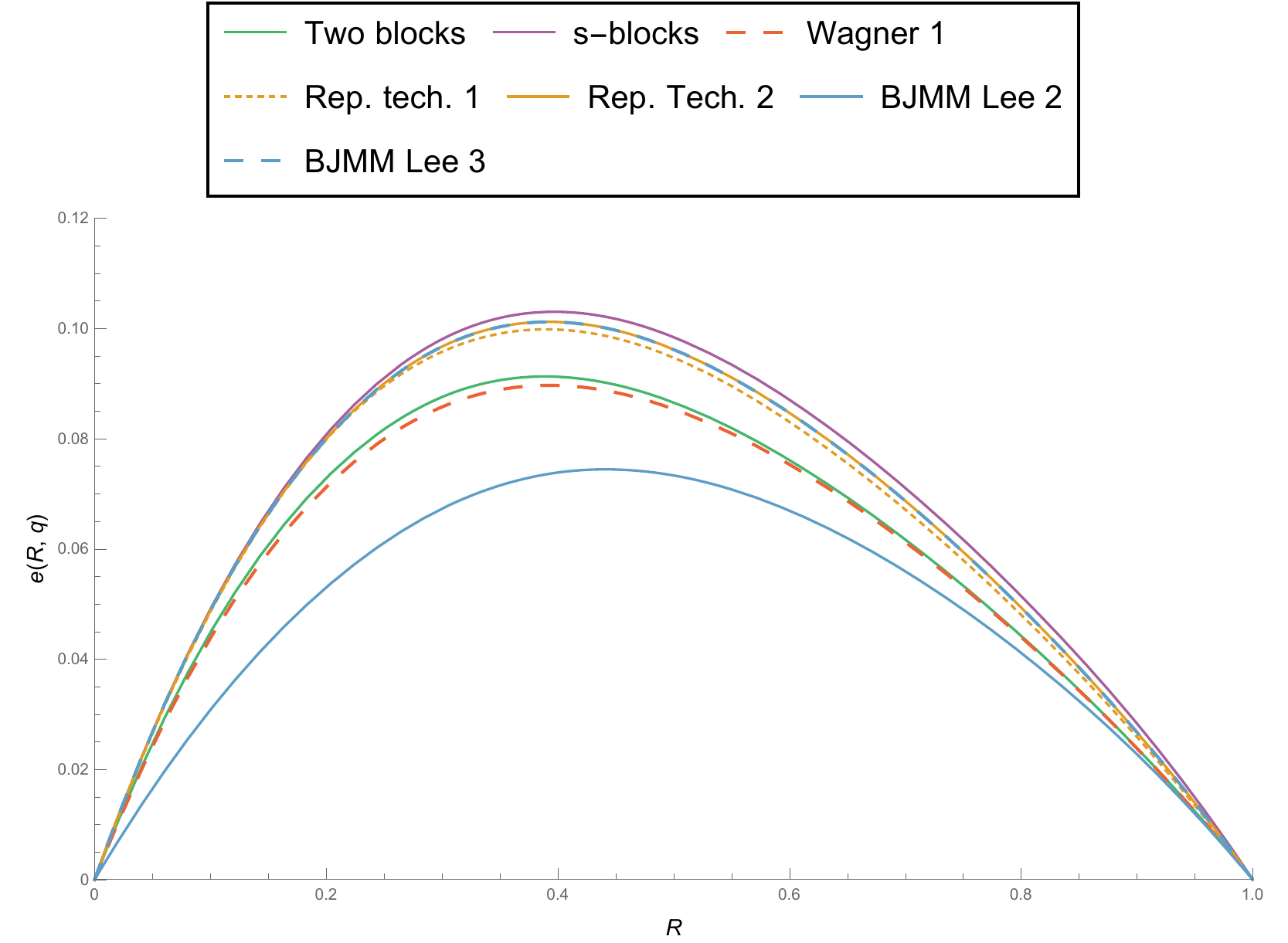}
            \caption{$\lambda = 1$}    
            
        \end{subfigure}
        \hfill
        \begin{subfigure}[b]{0.7\textwidth}  
            \centering 
            \includegraphics[width=\textwidth]{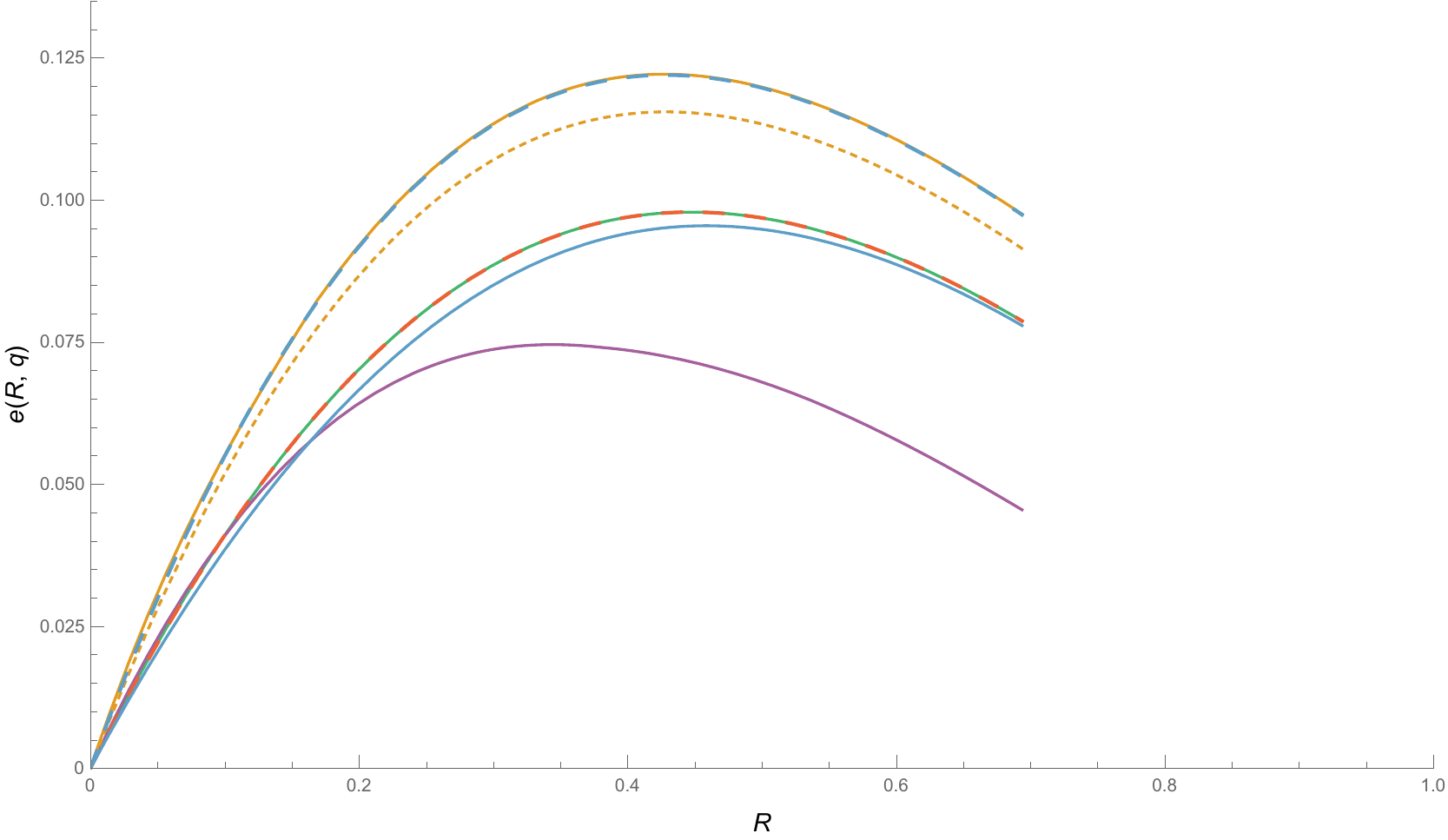}
            \caption{$\lambda = 0.75$}    
            
        \end{subfigure}
        \vskip\baselineskip
        \begin{subfigure}[b]{0.7\textwidth}   
            \centering 
            \includegraphics[width=\textwidth]{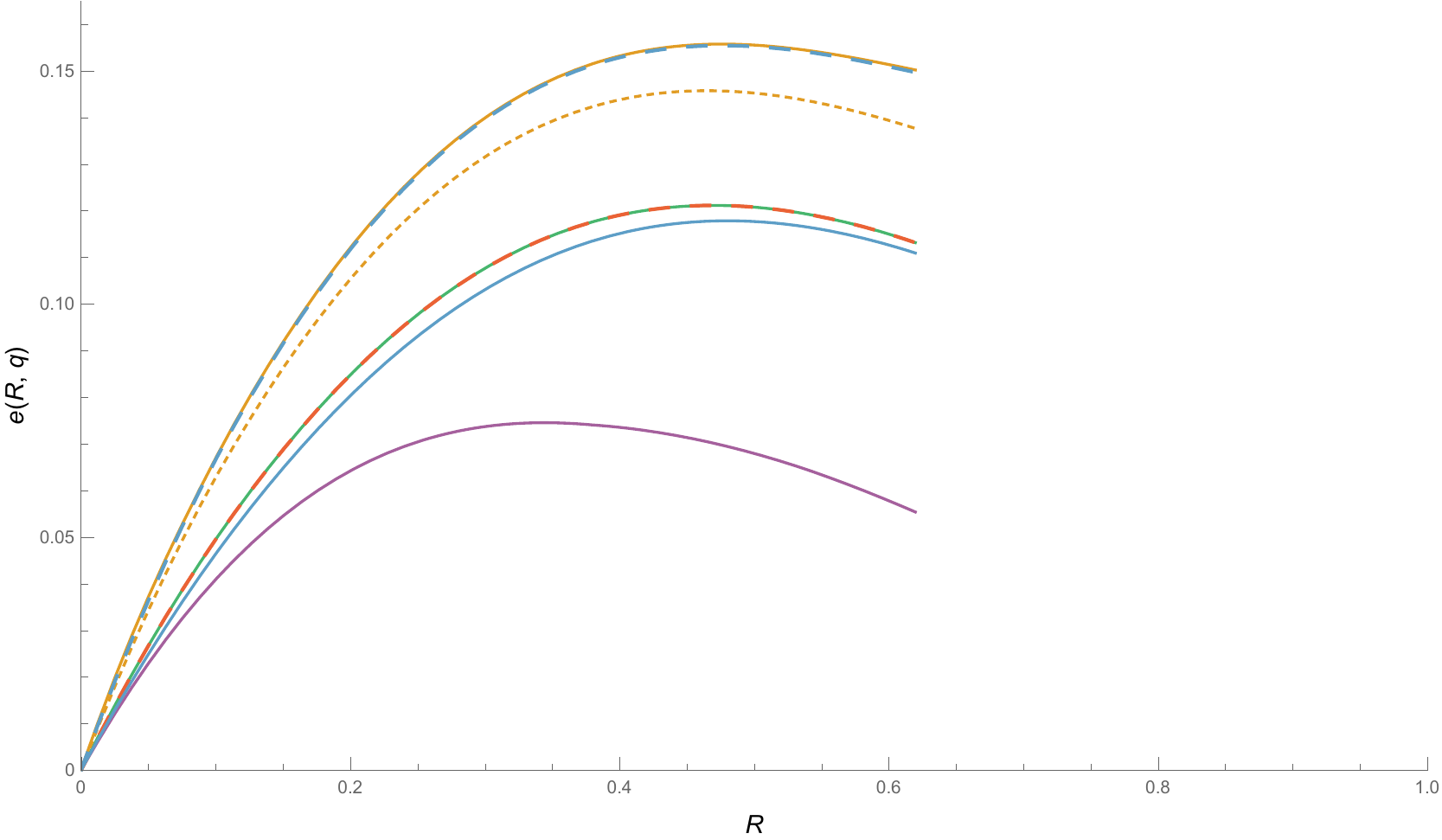}
            \caption{$\lambda = 0.50$}   
            
        \end{subfigure}
        \caption{Comparing asymptotic complexity of different algorithms for different values of $\lambda$. The values are calculated for $q=7^2$.} 
        \label{fig:asymptotics}
    \end{figure}

In Figure \ref{fig:asymptotics}, we compare the asymptotic complexity of all the ISD algorithms at different rates $R$ by optimizing the internal parameters of each algorithm. In Table \ref{table_compare}, we find the complexity in the worst case of each algorithm, i.e., we find $e(R^*,q)$ for $R^* = \argmax\limits_{0 \leq R \leq 1} \left( e(R,q) \right)$. Finally, in Table \ref{table_compareHamm}, we compare the worst-case complexity of Lee ISD algorithms with Hamming ISD algorithms, while fixing the base $q=4$ and $\lambda=1$.

\begin{table}[ht]
 \begin{center}
 \begin{tabular}{|c|c|c | c | c|c|c|}
 \hline 
   &  \multicolumn{2}{|c|}{$\lambda = 1$} & \multicolumn{2}{|c|}{$\lambda = 0.75$} & \multicolumn{2}{|c|}{$\lambda = 0.5$} \\ \hline 
   & $R^*$ & $e(R^*,q)$ & $R^*$ & $e(R^*,q)$ & $R^*$ & $e(R^*,q)$ \\\hline
   Two-Blocks & 0.3886 & 0.0913 & 0.4473 & 0.0978 & 0.4694  & 0.1211  \\
   $s$-Blocks & 0.3969 & 0.1030 & 0.3441 & 0.0745 & 0.3441  & 0.07453 \\
   Wagner $a=1$ & 0.3925 & 0.0897 & 0.4473 & 0.0978 & 0.4694  & 0.1211 \\
   Wagner $a=2$ & 0.3925 & 0.0897 & 0.4473 & 0.0978 & 0.4694  & 0.1211 \\
   Rep. tech. $a=1$ & 0.3896 & 0.0998 & 0.4288 & 0.1155 & 0.4648 & 0.1457 \\
   Rep. tech. $a=2$ & 0.3922 & 0.1012 & 0.4275 & 0.1221 & 0.4757 & 0.1557\\ 
   BJMM level 2 & 0.4414 & 0.07440 & 0.4587 & 0.0954 & 0.4801 & 0.1178 \\ 
   BJMM level 3 & 0.3921 & 0.1012 & 0.4282 & 0.1220 & 0.4754 & 0.1554
 
 \\ \hline
  \end{tabular}
\end{center}  
  \caption{Comparison of the asymptotic complexity of all the algorithms at rate $R^* = \argmax\limits_{0 \leq R \leq 1} \left( e(R,q) \right)$. The values are calculated for $q=7^2$.}\label{table_compare}
\end{table}

\begin{table}[ht]
 \begin{center}
 \begin{tabular}{|c|c|}
 \hline 
   & $e(R^*,q)$ \\\hline
   \textbf{Lee Metric} & \\
   Prange & 0.0575 \\
   $s$-Blocks & 0.0575 \\
   Two-Blocks & 0.0556 \\
   Wagner $a=1$ &  0.0556 \\
   Rep. tech. $a=1$ & 0.0569 \\
   Rep. tech. $a=2$ & 0.0571 \\ 
   BJMM level 2 & 0.05265 \\
 BJMM level 3&  0.0557 \\ \hline
 \textbf{Hamming Metric} & \\
 BJMM-MO & 0.04294 \\
 Stern & 0.04987 \\
 Prange & 0.05095  \\ \hline
  \end{tabular}
\end{center}  
  \caption{Comparison with Hamming metric for $q=4$ and $\lambda=1$. The values for Hamming metric ISD algorithms BJMM-MO and Stern are from \cite[Table 3]{klamti} and \cite[Table 1]{hirose}, respectively.}\label{table_compareHamm}
\end{table}

For free codes, we can see from Figure \ref{fig:asymptotics} (A) that BJMM on two levels is the fastest algorithm. In addition, we have observed that Wagner's approach on one level and the two-blocks algorithms have a similar cost, outperforming the representation technique algorithms.

The situation changes drastically when we consider non-free codes. In this case, the $s$-blocks algorithm outperforms all other algorithms, even BJMM.

 \begin{remark}
 Asymptotically, the algorithm based on Wagner's approach on two levels results in the same cost as the level one variant. Indeed it is easy to see that the term $U$ in the asymptotic complexity either gets canceled out or the cost minimizes at $U=0$. This is also observed in the values in Table \ref{table_compare}. 
 \end{remark}
 
 If we compare the asymptotic cost of the Lee-metric ISD algorithms to the corresponding algorithms in the Hamming metric, namely BJMM-MO, Stern and Prange we observe that decoding a random code in the Lee metric has a larger cost than in the Hamming metric, which makes it a promising alternative for cryptographic applications.

We provide all the SAGE \cite{sage} and Mathematica \cite{Mathematica} programs which can be used to compute the finite and asymptotic workfactors, respectively, at  \url{https://git.math.uzh.ch/isd/lee-isd/lee-isd-algorithm-complexities.git}

\section{Conclusion}\label{sec:concl}
In this paper we analyzed the hardness of the syndrome decoding problem in the  Lee metric. 

We extended the reduction  of Barg \cite{barg1994some}  to the syndrome decoding and the given weight codeword problem of additive weights, thus including also the NP-completeness of the syndrome decoding problem in the Lee metric.  In order to analyze more in depth the hardness of this problem, we provided several decoding algorithms for random linear Lee-metric codes and provided their asymptotic cost. 

The NP-completeness and the observation that, for a fixed set of code parameters, the cost of decoding a random code in the Lee metric is larger than in the Hamming metric, make the Lee metric a promising candidate for code-based cryptography.  We thus emphasize the need of further research on Lee-metric codes with suitable parameters for applications in cryptography. For example, Low-Lee-Density Parity-Check codes were recently introduced \cite{SantiniLowLee}.


\section*{Acknowledgments}\label{sec:ack}
 The first author  is  supported by the Swiss National Science Foundation grant number 195290. The second author  is  supported by the Estonian Research Council grant number PRG49.

\bibliographystyle{plain}

\bibliography{References}

\section*{Appendix}

For completeness we include here the proof of Proposition \ref{prop:AW-SDP}.

\begin{proof}
We prove the NP-completeness by a reduction from the 3DM problem. For this, we start with a random instance of 3DM with $T$ of size $t$, and $U \subseteq T \times T \times T$ of size $u$. Let us denote the elements in $T= \{b_1, \ldots, b_t\}$ and in $U= \{\ba_1, \ldots, \ba_u\}$. From this we build the matrix  $\bH^\top \in R^{u \times 3t}$ like $\overline{\bH}^\top$ in the proof of Proposition \ref{prop:GAWCP}.

In Figure \ref{fig:3dm} we provide a toy example to clarify how the construction in our reduction works.
Notice that the very same construction is used also in Proposition \ref{prop:GAWCP}, where we generalize the canonical problem of finding codewords with given weight, and prove it to be NP-complete.

\begin{figure}[ht]
\begin{subfigure}{0.3\textwidth}
\centering
  \begin{tikzpicture}[xscale=1.15]




\draw[fill = mygray] (-1.1, -3.28) rectangle (1.1, -3.72);
\node at (0,-7*0.5) {$\mathbf a_7 = \left\{A, B, C\right\}$};

\draw[fill = mygray] (-1.1, -1.78) rectangle (1.1, -2.22);
\node at (0,-4*0.5) {$\mathbf a_4 = \left\{B, C, D\right\}$};

\draw[fill = mygray] (-1.1, -1.28) rectangle (1.1, -1.72);
\node at (0,-3*0.5) {$\mathbf a_3 = \left\{D, A, B\right\}$};

\draw[fill = mygray] (-1.1, -2.28) rectangle (1.1, -2.72);
\node at (0,-5*0.5) {$\mathbf a_5 = \left\{C,D, A\right\}$};

\node at (0,-6*0.5) {$\mathbf a_6 = \left\{A,D, A\right\}$};
\node at (0,-2*0.5) {$\mathbf a_2 = \left\{C,B, A\right\}$};
\node at (0,-1*0.5) {$\mathbf a_1 = \left\{D,A, B\right\}$};

\end{tikzpicture}
  \caption{ }
\end{subfigure}
\begin{subfigure}{0.69\textwidth}
  \centering
\begin{tikzpicture}[xscale=1.85,yscale=1.32]

\draw[fill =  mygray] (-1.15, -0.1) rectangle (2, 0.15);
\draw[fill =  mygray] (-1.15, -0.1+0.32) rectangle (2, 0.15+0.32);
\draw[fill =  mygray] (-1.15, -0.1 - 0.32) rectangle (2, 0.15 - 0.32);

\draw[fill =  mygray] (-1.16, -0.1 - 0.96) rectangle (2, 0.15 - 0.97);

\node at (0.2,0) {$\mathbf H^\top =  \begin{pmatrix}
\left.\begin{matrix}
0 & 0 & 0 & 1 \\
0 & 0 & 1 & 0 \\
0 & 0 & 0 & 1 \\
0 & 1 & 0 & 0 \\
0 & 0 & 1 & 0 \\
1 & 0 & 0 & 0 \\
1 & 0 & 0 & 0
\end{matrix}\right|
\left.\begin{matrix}
1 & 0 & 0 & 0 \\
0 & 1 & 0 & 0 \\
1 & 0 & 0 & 0 \\
0 & 0 & 1 & 0 \\
0 & 0 & 0 & 1 \\
0 & 0 & 0 & 1 \\
0 & 1 & 0 & 0
\end{matrix}\right|
\begin{matrix}
0 & 1 & 0 & 0 \\
1 & 0 & 0 & 0 \\
0 & 1 & 0 & 0 \\
0 & 0 & 0 & 1 \\
1 & 0 & 0 & 0 \\
1 & 0 & 0 & 0 \\
0 & 0 & 1 & 0
\end{matrix}
\end{pmatrix}$};

\end{tikzpicture}

  \caption{ }
\end{subfigure}
\caption{Example on how to build an $(R,\rm wt)$-SDP instance from a 3DM instance, as in Proposition \ref{prop:AW-SDP}. 
In this example, we have $t = 4$, $u = 7$ and $T = \left\{A, B, C, D\right\}$. 
The triples in the set $U = \left\{\mathbf a_1, \cdots, \mathbf a_7\right\}$ are shown in (a),  while (b) reports the construction of $\mathbf H^\top$ from $U$. 
In (a) we highlighted the subset $W = \{\mathbf a_3, \mathbf a_4, \mathbf a_5, \mathbf a_7\}$, which constitutes a solution to the 3DM instance.
In (b) we highlighted the rows corresponding to the triples in $W$; it is easily seen that summing the rows of $\mathbf H^\top$ corresponding to the triples in $W$ results in the all-one vector, i.e., the target syndrome.}
\label{fig:3dm}
\end{figure}

With this construction, we have that each column of $\bH$ corresponds to an element in $U$, and has weight $3$. 
Let us set the syndrome $\bs$ as the all-one vector of length $3t$.
Assume that we can solve the $(R,\rm wt)$-SDP on the instances $\bH, \bs$ and $t$ in polynomial time. 
Let us consider two cases.

\underline{Case 1:} 
First, assume that the $(R,\rm wt)$-SDP solver returns as answer `yes', i.e., there exists an $\be \in R^u$, of weight less than or equal to $t$ and such that $\be\bH^\top =\bs.$ 
\begin{itemize}
    \item 
We first observe that we must have $\weight{\be}=\card{\supp{\be}}=t$. For this note that each column of $\bH$ adds at most 3 non-zero entries to $\bs$. Therefore, we need to add at least $t$ columns to get $\bs$, i.e., $\card{\supp{\be}} \geq t$ and hence $\weight{\be} \geq t$. As we also have $\weight{\be}\leq t$ by hypothesis, this implies that $\weight{\be}=\card{\supp{\be}}=t$. Moreover, we have that $\weight{e_i}=1$ for all $i \in \supp{\be}$.
\item Secondly, we observe that the weight $t$ solution must be a vector with entries in $\{0,1\}$. For this we note that the matrix $\bH$ has entries in $\{0,1\}$ and has constant column weight three, and since $\card{\supp{\be}}=t$, the supports of the $t$ columns of $\bH$ that sum up to the all-one vector have to be disjoint.  Therefore, we get that the $j$-th equation from the system of equations $ \be \bH^\top =\bs$ is of the form $e_{i} h_{i,j} = 1$ for some $i \in \supp{\be}$. Since $h_{i,j}=1$, we have $e_{i} =1$.
\end{itemize}
Recall from above that the columns of $\bH$ correspond to the elements of $U$. The $t$ columns corresponding to the support of $\be$ are now a solution $W$ to the 3DM problem. This follows from the fact that the $t$ columns have disjoint supports and add up to the all-one vector, which implies that each element of $T$ appears exactly once in each coordinate of the elements of $W$.

\underline{Case 2:} 
If the solver returns as answer `no', this is also the correct answer for the 3DM problem. In fact, it is easy to see that the above construction also associates any solution $W$ of the 3DM to a solution $\be$ of the corresponding $(R, \rm wt)$-SDP.

Thus, if such a polynomial time solver exists, we can also solve the 3DM problem in polynomial time.   
\end{proof}

\end{document}